\documentclass[journal]{IEEEtran}
\usepackage[caption=false,font=normalsize,labelfont=sf,textfont=sf]{subfig}
\usepackage{cite}
\usepackage{booktabs}
\usepackage{amsmath,amssymb,amsfonts}
\usepackage{amsthm}
\usepackage{algorithmic}
\usepackage{graphicx}
\usepackage{textcomp}
\usepackage{xcolor}
\usepackage{makecell}
\usepackage{multirow}
\usepackage{todonotes}
\usepackage{tcolorbox}
\usepackage{mathtools}
\usepackage{diagbox}
\usepackage{array}
\usepackage{verbatim}
\usepackage{bm}
\usepackage{upgreek}
\usepackage{tabularx}
\usepackage{cleveref}
\newtheorem{proposition}{Proposition}
\newtheorem{definition}{Definition}
\newtheorem{lemma}{Lemma}

\theoremstyle{definition}
\newtheorem{remark}{Remark}

\allowdisplaybreaks

\makeatletter
\renewcommand{\citepunct}{,\penalty\@m\hskip.13emplus.1emminus.1em}
\renewcommand{\citedash}{\hbox{--}\penalty\@m}
\makeatother

\def\BibTeX{{\rm B\kern-.05em{\sc i\kern-.025em b}\kern-.08em
    T\kern-.1667em\lower.7ex\hbox{E}\kern-.125emX}}
\usepackage{balance}

\begin{document}
\title{Implicit Neural Representation for Multiuser Continuous Aperture Array Beamforming}
\author{Shiyong~Chen,~\IEEEmembership{Student Member,~IEEE,} Shengqian~Han,~\IEEEmembership{Senior Member,~IEEE}, and Jia~Guo,~\IEEEmembership{Member,~IEEE}

\thanks{Shiyong Chen is with the School of Electronics and Information Engineering, Beihang University, Beijing 100191, China (email: shiyongchen@buaa.edu.cn).}
\thanks{Shengqian Han is with the School of Electronics and Information Engineering, Beihang University, Beijing 100191, China (email: sqhan@buaa.edu.cn).}
\thanks{Jia Guo is with the School of Electronic Engineering and Computer
Science, Queen Mary University of London, London E1 4NS, U.K. (email: jia.guo@qmul.ac.uk).}
%\thanks{Manuscript received April 19, 2025; revised August 16, 2025.}
}

%\markboth{Journal of \LaTeX\ Class Files,~Vol.~18, No.~9, September~2020}%
%{How to Use the IEEEtran \LaTeX \ Templates}

\maketitle

\begin{abstract}
This paper studies the optimization of beamforming functions for multiuser multi-continuous aperture array (CAPA) systems, where both the base station and the users are equipped with CAPAs. We first derive a closed-form expression for the achievable sum rate, and then develop a functional weighted minimum mean-squared error (WMMSE) algorithm, which transforms the functional optimization problem into an equivalent parameter optimization problem by employing orthonormal basis expansion. Based on the functional WMMSE algorithm, we further propose BeamINR, an implicit neural representation (INR) method for learning continuous beamforming functions. BeamINR is designed as a graph neural network to exploit the permutation equivariance of the optimal beamforming policy, with an update equation designed according to the functional WMMSE iterations. Simulation results show that both the functional WMMSE algorithm and BeamINR outperform existing numerical and INR-based baselines. BeamINR approaches the sum rate of the functional WMMSE with substantially lower inference latency. Compared with INR-based baselines, BeamINR reduces training complexity and improves generalization to the number of users, CAPA sizes, and carrier~frequencies.
\end{abstract}

\begin{IEEEkeywords}
Continuous aperture array (CAPA), beamforming, WMMSE, implicit neural representation (INR).
\end{IEEEkeywords}

\section{Introduction}
\IEEEPARstart{M}{assive} multi-input multi-output (MIMO) is a key technology for improving spectral and energy efficiency~\cite{Massive_MIMO_for}. However, its scalability is constrained by the use of spatially discrete arrays (SPDAs). In fully digital architectures, each antenna requires a dedicated radio-frequency chain, which increases hardware cost and power consumption. Moreover, the conventional half-wavelength element spacing leads to physically large arrays, posing deployment challenges~\cite{Massive_MIMO_ten_myths}. 
Holographic MIMO (HMIMO)~\cite{Holographic_MIMO_Surfaces, Holographic_MIMO_How_Many, Learning_based_Multiuser_Beamforming} and reconfigurable intelligent surfaces~\cite{Reconfigurable_Intelligent_Surfaces, Wireless_Communications_Through_Reconfigurable} 
alleviate these limitations by integrating densely spaced radiating elements onto compact surfaces. 
A continuous aperture array (CAPA) is the limiting case of such an architecture with an infinite number of radiating elements. Unlike conventional SPDA beamforming, CAPA beamforming is defined by a continuous current distribution over the aperture. The resulting beamforming optimization is therefore a functional optimization problem that cannot be solved directly by conventional array-processing methods~\cite{CAPA_based}.

Numerical methods for CAPA beamforming follow two main approaches. Approximation-based methods expand the channel and beamforming functions over a finite Fourier basis and then optimize the resulting coefficients~\cite{Wavenumber, On_the_SE, Pattern_Division}. Although this converts the functional problem into a finite-dimensional form, its approximation accuracy and computational complexity depend on the basis size, which grows with both the carrier frequency and the aperture size~\cite{Beamforming_Optimization, Optimal_Beamforming, Continuous_aperture_arrays}. In contrast, direct functional methods apply variational analysis in the continuous domain, thereby avoiding the approximation loss~\cite{Multi_Group_Multicast, Beamforming_Design, Optimal_Waveform_Design}. Nevertheless, the involved iterative functional updates and repeated numerical integration still incur substantial computational cost. This tradeoff between accuracy and complexity motivates the development of learning-based methods for optimizing continuous beamforming functions with fast inference.

\subsection{Related Works}
Deep neural networks (DNNs) have emerged as a promising approach to reduce the computational cost of CAPA beamforming. However, standard DNNs map finite-dimensional inputs to finite-dimensional outputs, whereas CAPA channels and beamformers are continuous functions. To address the mismatch, in a multiuser single-CAPA downlink, where a CAPA-equipped base station (BS) serves multiple single-antenna users, the optimal beamforming function is simplified to a weighted sum of channel response functions. This allows a network to learn only a finite set of weights~\cite{Multi_User_CAPA, Transformer_Enabled}. However, this method does not extend to multi-CAPA systems where user-side CAPAs cause these weights to become functions. 
For single-user multi-CAPA systems, an implicit neural representation (INR) was proposed to parameterize the continuous function~\cite{Implicit_Neural_CAPA}. This method, however, employs a fully connected neural network (FNN), which suffers from high training complexity and lacks generalizability, e.g., requiring retraining when the number of users~changes.

To reduce training complexity and enhance size generalizability of DNNs, existing studies have exploited mathematical properties of target policies, such as permutation equivariance (PE), for DNN design. Various graph neural networks (GNNs) have been designed to leverage different PE properties for wireless communication tasks, including power allocation~\cite{Optimal_Wireless_Resource_Allocation, Learning_power, GNNs_for_Scalable_Radio, Learn_to_optimize}, user scheduling~\cite{Graph_Embedding_Based, Learning_Based_User, Learning_Wideband_User}, and conventional beamforming~\cite{Understanding_the_Performance, Designing_Heterogeneous_GNNs, Learning_Hybrid_Precoding}.  

Another effective technique is model-driven learning, where mathematical models are incorporated into the DNN design~\cite{Iterative_Algorithm_Induced, Deep_Graph_Unfolding, Learn_to_Rapidly, Transfer_Learning_and, A_Bipartite_Graph, Joint_User_Scheduling, A_Model_based_DNN, A_Model_Based, A_Gradient_Driven, Gradient_Driven_Graph, Gradient_GNN}.
A representative example is deep unfolding, which maps the iterations of a traditional optimization algorithm into trainable network layers. For example, this has been used to unfold the weighted minimum mean-squared error (WMMSE) algorithm to mitigate the complexity of matrix inversions~\cite{Iterative_Algorithm_Induced, Deep_Graph_Unfolding}, and the projected-gradient descent to learn adaptive step sizes~\cite{Learn_to_Rapidly}. While these networks offer better interpretability, they often retain many of the numerical operations of the original algorithms and can remain computationally intensive. Other model-driven learning includes exploiting the structure of an optimal solution to simplify the learning problem~\cite{Transfer_Learning_and, A_Bipartite_Graph, Joint_User_Scheduling, A_Model_based_DNN} or using mathematical models to guide the design of GNNs. For example, in~\cite{A_Model_Based} 
a model-driven GNN was constructed based on the Taylor expansion of the matrix pseudo-inverse, and in~\cite{A_Gradient_Driven, Gradient_Driven_Graph, Gradient_GNN} gradient-driven GNNs were designed by exploiting the similarity between GNN updates and gradient descent. 

These GNN-based and model-driven learning methods were developed for SPDA systems. Consequently, they cannot be directly applied to CAPA systems, where the key challenge remains the learning of continuous beamforming functions.

\subsection{Motivation and Contributions}
While INR-based methods can learn continuous beamforming functions, their application has been limited to single-user multi-CAPA systems. Moreover, these methods have not exploited PE properties or mathematical models to guide the DNN design, leading to high training complexity and poor generalization. The extension to the more complex multiuser multi-CAPA scenario is nontrivial, as its achievable sum rate remains unknown and the structure of the optimal solution has not been characterized to facilitate the model-driven~design.

This paper aims to address these limitations and the main contributions are summarized as follows.\footnote{A part of this work, specifically the derivation of the functional WMMSE algorithm, was reported in a conference paper~\cite{Functional_WMMSE}. This manuscript substantially extends~\cite{Functional_WMMSE} by deriving a closed-form expression for the achievable sum rate in multiuser multi-CAPA systems, designing BeamINR to learn the continuous beamforming function, and providing additional simulation results for both the numerical optimization and the INR-based~methods.}

\begin{itemize}
\item We derive a novel closed-form expression for the achievable sum rate in a multiuser multi-CAPA system, where both the BS and users are equipped with CAPAs. This expression provides the foundation for the numerical and learning-based beamforming methods \mbox{developed~subsequently}.
\item We develop a functional WMMSE algorithm for CAPA systems by first reformulating the sum-rate maximization problem as an equivalent sum mean-squared error (MSE) minimization problem and then employing an orthonormal basis expansion to convert the functional optimization into a parameter optimization problem. By mapping the optimality conditions of this parameter optimization back to the continuous function domain, we derive the iterative update equations for the functional WMMSE.
\item We propose BeamINR, a GNN-based INR that learns the continuous beamforming function. BeamINR is designed to exploit the PE property of the optimal beamforming policy, and its architecture is inspired by the iterative structure of our functional WMMSE. Simulation results show that BeamINR achieves a sum rate close to the functional WMMSE with substantially lower inference latency. Compared with INR-based baselines, BeamINR reduces training complexity and generalizes better to the number of users, CAPA size, and carrier~frequency. 
\end{itemize}  

\textit{Notations:} $(\cdot)^{\mathsf T}$, $(\cdot)^{\mathsf H}$, $(\cdot)^{*}$, and $\|\cdot\|$ denote the transpose, Hermitian transpose, conjugate, and Frobenius norm of a matrix, respectively. $|\cdot|$ denotes the magnitude of a complex value, and $\mathbf{I}_z$ is the identity matrix of size $z\times z$.

\section{System Model}
\begin{figure}%[htbp]
\centering
\includegraphics[width=0.4\textwidth]{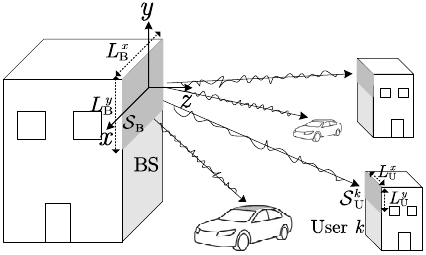}
\caption{Illustration of a multiuser multi-CAPA system.}  \label{CAPA}
\end{figure}
Consider a downlink multiuser multi-CAPA system shown in Fig.~\ref{CAPA}, where a BS equipped with a CAPA serves $K$ users, each also equipped with a CAPA. The rectangular BS aperture is placed on the $xy$-plane and centered at the origin, with side lengths $L_{\mathrm{B}}^x$ and $L_{\mathrm{B}}^y$ along the $x$- and $y$-axes, respectively. A point on the BS aperture is denoted by $\mathbf{s}=[s_x, s_y, 0]^{\mathsf{T}}$, and the set of all such points is denoted by $\mathcal{S}_\mathrm{B}$. The $k$-th user CAPA is centered at location $\mathbf{r}^k_{o}$ and has side lengths $L_{\mathrm{U}}^{x}$ and $L_{\mathrm{U}}^{y}$. 
Since the user CAPA may have an arbitrary spatial orientation, we describe its rotations around the three coordinate axes by angles $\omega^k_x$, $\omega^k_y$, and $\omega^k_z$, with the corresponding rotation matrices $\mathbf{R}_x(\omega^k_x)$, $\mathbf{R}_y(\omega^k_y)$, and $\mathbf{R}_z(\omega^k_z)$. To describe points on the $k$-th user CAPA, we introduce a local coordinate system, where the origin is $\mathbf{r}^k_o$ and the $xy$-plane coincides with the user CAPA. In this local coordinate system, a point on the $k$-th user CAPA is denoted by $\bar{\mathbf{r}}^k=[\bar{r}^k_x, \bar{r}^k_y, 0]^{\mathsf{T}}$, and the set of all such points is denoted by $\bar{\mathcal{S}}^k_\mathrm{U}$. The corresponding point in the global coordinate system is
\begin{equation} \label{rotation equation}
    \mathbf{r} =\mathbf{R}_x(\omega^k_x)\mathbf{R}_y(\omega^k_y)\mathbf{R}_z(\omega^k_z)\bar{\mathbf{r}}^k + \mathbf{r}^k_o,\quad \bar{\mathbf{r}}^k\in\bar{\mathcal{S}}^k_{\mathrm{U}},
\end{equation}
where $\mathcal{S}^k_\mathrm{U}$ denotes the region occupied by the $k$-th user CAPA in the global coordinate system.

The BS transmits $d$ data streams to each user. Let $\mathbf{v}_k(\mathbf{s})=[v_{1}^k(\mathbf{s}), \ldots, v_{d}^k(\mathbf{s})]\in\mathbb{C}^{1\times d}$ denote the transmit beamforming function for user $k$, where $v_{i}^k(\mathbf{s})$ specifies the continuous current distribution associated with the $i$-th stream. Let $\mathbf{x}_k=[x_{1}^k, \ldots, x_{d}^k]^{\mathsf T}\in\mathbb{C}^{d\times 1}$ denote the data symbols intended for user $k$, with $\mathbb{E}\{|x_i^k|^2\} = 1$. The received signal at point $\mathbf{r}$ on the $k$-th user CAPA is
\begin{equation}\label{receive yk}
\!\!{y}_k(\mathbf{r})\!= \sum_{i=1}^K{\int_{\mathcal{S}_{\mathrm{B}}}{\!h_k\left( \mathbf{r},\mathbf{s} \right) \mathbf{v}_i(\mathbf{s})\mathbf{x}_i\mathrm{d}\mathbf{s}}}+n_k(\mathbf{r}),\, \mathbf{r}\in \mathcal{S}_{\mathrm{U}}^k,
\end{equation}
where $h_k(\mathbf{r}, \mathbf{s})$ is the continuous channel kernel\footnote{In functional analysis, a two-variable function used in an integral operator, such as $h_k(\mathbf{r}, \mathbf{s})$ in~\eqref{receive yk}, is referred to as the kernel of the integral operator.} from point $\mathbf{s}$ on the BS CAPA to point $\mathbf{r}$ on the $k$-th user CAPA, and $n_k(\mathbf{r}) \sim \mathcal{CN}(0, \sigma_n^2)$ is additive white Gaussian noise with variance $\sigma_n^2$.

As widely used in the literature~\cite{Beamforming_Design, Multi_User_CAPA}, we consider single-polarized CAPAs under line-of-sight conditions. The BS is polarized along the $y$-axis, while each user's polarization rotates with its local coordinate system. The channel response function $h_k(\mathbf{r}, \mathbf{s})$ is expressed~as
\begin{equation} \label{Green's function}
\!\!\!h_k(\mathbf{r}, \mathbf{s})\!=\!\bm{\Lambda}_{\mathrm{R}}^{k,\mathsf T}\frac{-j \eta e^{-j \frac{2\pi}{\lambda} \| \mathbf{r} - \mathbf{s} \|}}{2\lambda \| \mathbf{r}\! -\! \mathbf{s} \|} 
\!\bigg(\mathbf{I}_3\!-\!\frac{(\mathbf{r}\! -\! \mathbf{s})(\mathbf{r} \!- \!\mathbf{s})^{\mathsf T}}{\| \mathbf{r} - \mathbf{s} \|^2} \bigg)\!\bm{\Lambda}_{\mathrm{T}},
\end{equation}
where $\mathbf{r}\in\mathcal{S}_{\mathrm{U}}^k$, $\mathbf{s}\in \mathcal{S}_{\mathrm{B}}$, $\bm{\Lambda}_{\mathrm{T}} = [0, 1, 0]^{\mathsf T}$ and $\bm{\Lambda}^k_{\mathrm{R}} = \mathbf{R}_x(\omega_x^k)\mathbf{R}_y(\omega_y^k)\mathbf{R}_z(\omega_z^k)\bm{\Lambda}_{\mathrm{T}}$ are unit polarization vectors, $\eta$ is the intrinsic impedance, and $\lambda$ is the wavelength. To focus on beamforming optimization, we assume perfect channel state information at the BS. To obtain the continuous channel, parametric methods can estimate a finite set of channel parameters and reconstruct the channel response functions~\cite{Parametric_Channel_Estimation, Fourier_Plane_Wave}.

\section{Problem Formulation}\label{Sec:Problem Formulation}
We aim to maximize the sum rate by optimizing the continuous beamforming functions. While prior work has derived closed-form expressions for the achievable rate in multiuser single-CAPA systems~\cite{Pattern_Division, Optimal_Beamforming} and single-user multi-CAPA systems~\cite{Beamforming_Design}, these scenarios only account for either inter-user or intra-user interference, respectively. The multiuser multi-CAPA system considered in this paper, however, involves both types of interference simultaneously. We therefore begin by deriving a closed-form expression for its achievable sum rate to formulate the corresponding optimization problem. To this end, we first need to define the inverse of a continuous~kernel.

\begin{definition}[Inverse of a Continuous Kernel]\label{Definition:Inverse of continuous kernel}
For a continuous kernel $G(\mathbf r, \mathbf s)$, a kernel $G^{-1}(\mathbf z, \mathbf r)$ is defined as the inverse of $G(\mathbf r, \mathbf s)$ if it satisfies~\cite{Optimal_Beamforming}
\begin{equation}\label{eq:inverse_of_kernel}
 \int_{\mathcal S} G^{-1}(\mathbf z,\mathbf r)G(\mathbf r,\mathbf s)\mathrm d\mathbf r = \delta(\mathbf z-\mathbf s),
\end{equation}
for all $\mathbf{r}, \mathbf{s}, \mathbf{z}\in\mathcal{S}$, where $\delta(\cdot)$ is the Dirac delta function.
\end{definition}

\begin{proposition}\label{proposition_capacity}
The achievable sum rate of the multiuser multi-CAPA system is given by 
\begin{equation}\label{eq:theorem1 capacity}
    R= \sum\limits_{k=1}^{K} \log\det\big(\mathbf{I}_{d} +\mathbf{Q}_k\big)
\end{equation}
where 
\begin{subequations}\label{proposition:WSR-CAPA}
\begin{align}
&   \mathbf{Q}_k\!=\!
 \iint_{\mathcal{S}_{\mathrm{U}}}\!\!
\mathbf{a}_{kk}^{\mathsf{H}}(\mathbf{r}_1)\,
\mathrm{J}_{\bar{k}}^{-1}(\mathbf{r}_1,\mathbf{r}_2)\,
\mathbf{a}_{kk}(\mathbf{r}_2)\,
\mathrm{d}\mathbf{r}_2\,\mathrm{d}\mathbf{r}_1,\label{eq:theorem1 Q}\\
&   \mathrm{J}_{\bar{k}}(\mathbf{r}_1,\mathbf{r}_2)\!=\! \sum\limits_{j=1,j\neq k}^K\!\! \mathbf{a}_{kj}(\mathbf{r}_1)\,\mathbf{a}_{kj}^{\mathsf{H}}(\mathbf{r}_2) 
\!+\!{\sigma_n^{2}}\delta(\mathbf{r}_1-\mathbf{r}_2), \label{eq:theorem1 R}\\
&    \mathbf{a}_{kj}(\mathbf{r}) 
=  \int_{\mathcal{S}_{\mathrm{B}}}\!h_k(\mathbf{r},\mathbf{s})\,\mathbf{v}_j(\mathbf{s})\,\mathrm{d}\mathbf{s}.\label{eq:theorem1 a}
\end{align}
\end{subequations}
Here, $\mathcal{S}_{\mathrm{U}}=\bigcup_{k=1}^K{\mathcal{S}_{\mathrm{U}}^k}$ and $h_k(\mathbf{r},\mathbf{s})$ is defined to be zero for $\mathbf{r}\notin\mathcal{S}_{\mathrm{U}}^k$.
\end{proposition}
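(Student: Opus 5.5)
The plan is to treat each user's received field as a continuous-index Gaussian channel with colored noise. The multi-stream rate is first reduced to a finite-dimensional $\log\det$ by projecting onto a finite orthonormal basis, and the result is then lifted back to kernels using Definition~\ref{Definition:Inverse of continuous kernel}.

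\textbf{Step 1: signal model.} Substituting \eqref{eq:theorem1 a} into \eqref{receive yk}, the received field at user $k$ becomes
\begin{equation*}
y_k(\mathbf r)=\mathbf a_{kk}(\mathbf r)\mathbf x_k+\sum_{j\neq k}\mathbf a_{kj}(\mathbf r)\mathbf x_j+n_k(\mathbf r),\qquad \mathbf r\in\mathcal S_{\mathrm U}^k .
\end{equation*}
I will treat the inter-user interference as Gaussian noise, which is standard for an achievable rate. The interference-plus-noise term is then a zero-mean Gaussian random field. Its covariance kernel is exactly $\mathrm J_{\bar k}(\mathbf r_1,\mathbf r_2)$ in \eqref{eq:theorem1 R}, because the symbols are independent with unit power and the noise is white with variance $\sigma_n^2$. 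Extending $h_k$ by zero outside $\mathcal S_{\mathrm U}^k$ lets every integral be written over $\mathcal S_{\mathrm U}$.

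\textbf{Step 2: finite-dimensional reduction.} Pick an orthonormal basis $\{\phi_n\}$ of $L^2(\mathcal S_{\mathrm U}^k)$ and project onto the first $N$ elements. The projected signal is $\mathbf y_k^{(N)}=\mathbf A_{kk}\mathbf x_k+\sum_{j\neq k}\mathbf A_{kj}\mathbf x_j+\mathbf n_k$, where $[\mathbf A_{kj}]_{n,:}=\int\phi_n^*(\mathbf r)\mathbf a_{kj}(\mathbf r)\mathrm d\mathbf r$ and $\mathbf n_k\sim\mathcal{CN}(\mathbf 0,\sigma_n^2\mathbf I_N)$ by orthonormality. The standard MIMO result then gives
\begin{equation*}
R_k^{(N)}=\log\det\big(\mathbf I_d+\mathbf A_{kk}^{\mathsf H}\mathbf J_N^{-1}\mathbf A_{kk}\big),\qquad \mathbf J_N=\sum_{j\neq k}\mathbf A_{kj}\mathbf A_{kj}^{\mathsf H}+\sigma_n^2\mathbf I_N,
\end{equation*}
where the identity $\log\det(\mathbf I_N+\mathbf J_N^{-1}\mathbf A\mathbf A^{\mathsf H})=\log\det(\mathbf I_d+\mathbf A^{\mathsf H}\mathbf J_N^{-1}\mathbf A)$ (Sylvester) is used. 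Projection is a processing step, so the received field is at least as informative. Conversely, the component of the field orthogonal to $\operatorname{span}\{\mathbf a_{kj}\}$ is pure white noise independent of everything else. Hence choosing a basis that spans this finite-dimensional signal space (dimension at most $Kd$) loses nothing, and $R_k$ equals $R_k^{(N)}$ for such $N$.

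\textbf{Step 3: lifting back to kernels.} I would show that $\mathbf A_{kk}^{\mathsf H}\mathbf J_N^{-1}\mathbf A_{kk}$ coincides with $\mathbf Q_k$ in \eqref{eq:theorem1 Q}. The operator with kernel $\mathrm J_{\bar k}$ acts as $\mathbf J_N$ on the signal subspace and as $\sigma_n^2$ times the identity on its orthogonal complement. Its inverse kernel is therefore
\begin{equation*}
\mathrm J_{\bar k}^{-1}(\mathbf r_1,\mathbf r_2)=\boldsymbol\phi^{\mathsf H}(\mathbf r_1)\big(\mathbf J_N^{-1}-\sigma_n^{-2}\mathbf I_N\big)\boldsymbol\phi(\mathbf r_2)+\sigma_n^{-2}\delta(\mathbf r_1-\mathbf r_2).
\end{equation*}
This satisfies \eqref{eq:inverse_of_kernel} by direct substitution, using orthonormality and completeness $\sum_n\phi_n(\mathbf r_1)\phi_n^*(\mathbf r_2)=\delta(\mathbf r_1-\mathbf r_2)$. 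Inserting it into \eqref{eq:theorem1 Q}, the two $\sigma_n^{-2}$ terms cancel because $\mathbf a_{kk}$ lies in the span, leaving $\mathbf A_{kk}^{\mathsf H}\mathbf J_N^{-1}\mathbf A_{kk}$. An alternative route writes $\mathrm J_{\bar k}^{-1}$ by a Woodbury-type identity and verifies \eqref{eq:inverse_of_kernel} directly. Summing over $k$ then gives \eqref{eq:theorem1 capacity}.

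\textbf{Main obstacle.} The hardest part is making the passage from the field to finitely many coefficients rigorous: the channel has infinitely many dimensions and $\mathrm J_{\bar k}$ contains a delta function. I would first try the sufficient-statistic argument, showing that the orthogonal noise component is independent of both the signal and the in-span noise. A fallback is a limit $N\to\infty$ with a monotone convergence argument. A second point to handle is interpreting the inverse of Definition~\ref{Definition:Inverse of continuous kernel} on the union $\mathcal S_{\mathrm U}$ when the $\mathbf a_{kj}$ are supported only on $\mathcal S_{\mathrm U}^k$; I would handle this by noting that outside $\mathcal S_{\mathrm U}^k$ the inverse reduces to $\sigma_n^{-2}\delta$, which is multiplied by zero.
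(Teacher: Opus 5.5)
Your proof is correct and shares the paper's skeleton: expand in an orthonormal basis, evaluate a finite-dimensional Gaussian $\log\det$, then lift back to kernels. Both key steps, however, are done differently.

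\textbf{Reduction to finitely many coefficients.} Appendix~\ref{appendix_capacity} expands in a complete basis with $N_r\to\infty$ and cancels a formal $\log|\det(\bm{\upalpha}_k(\cdot))|$ term. It then passes to the limit in two Gaussian entropies that individually diverge; only their difference is finite. You instead project onto an orthonormal basis of the at most $Kd$-dimensional span of $\{\mathbf a_{kj}\}_{j=1}^K$. You then use the fact that the coefficients on its orthogonal complement are white noise, independent of the in-span observation and of all symbols. So your finite-dimensional formula is exact and no limit is needed.

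\textbf{Lifting back to kernels.} The paper applies the matrix Woodbury identity so that $\tilde{\mathbf Q}_k$ involves only fixed-size Gram matrices such as $\mathbf A_{kk}^{\mathsf H}\mathbf A_{kk}$ and $\tilde{\mathbf A}_k^{\mathsf H}\tilde{\mathbf A}_k$. It rewrites these as integrals and identifies the resulting kernel as $\mathrm J_{\bar k}^{-1}$ via the functional Woodbury identity of Lemma~\ref{lem:woodbury}. You bypass that lemma by writing $\mathrm J_{\bar k}^{-1}$ explicitly as $\mathbf J_N^{-1}$ on the span plus $\sigma_n^{-2}$ times the identity on the complement, and checking \eqref{eq:inverse_of_kernel} directly.

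\textbf{Trade-off.} Your route is more elementary and more rigorous on the information-theoretic side. The paper's route avoids choosing a basis adapted to the signal span. It also produces Lemma~\ref{lem:woodbury} as a reusable tool, which the paper needs again in Appendices~\ref{appendix:MSE} and~\ref{appendix:wmmse_recursion}.

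\textbf{Small fixes.} First, for $\boldsymbol\phi^{\mathsf H}(\mathbf r_1)(\cdot)\boldsymbol\phi(\mathbf r_2)$ to be consistent with $[\mathbf A_{kj}]_{n,:}=\int\phi_n^*(\mathbf r)\mathbf a_{kj}(\mathbf r)\,\mathrm d\mathbf r$, take $\boldsymbol\phi(\mathbf r)=[\phi_1^*(\mathbf r),\ldots,\phi_N^*(\mathbf r)]^{\mathsf T}$. Second, the check of \eqref{eq:inverse_of_kernel} uses only $\int\boldsymbol\phi(\mathbf r)\boldsymbol\phi^{\mathsf H}(\mathbf r)\,\mathrm d\mathbf r=\mathbf I_N$ and the sifting property of $\delta$. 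The coefficient matrix collapses to $(\mathbf J_N^{-1}-\sigma_n^{-2}\mathbf I_N)\mathbf J_N+\sigma_n^{-2}(\mathbf J_N-\sigma_n^{2}\mathbf I_N)=\mathbf 0$. Completeness is not needed, and a basis of the finite-dimensional span does not have it anyway. Third, $\mathrm J_{\bar k}$ is $\sigma_n^2$ times the identity plus a finite-rank positive semidefinite operator, so its inverse is unique. Hence your kernel is indeed the one appearing in \eqref{eq:theorem1 Q}.
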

\begin{IEEEproof}
See Appendix~\ref{appendix_capacity}.
\end{IEEEproof}

With Proposition~\ref{proposition_capacity}, the sum-rate maximization problem is formulated as
\begin{subequations}\label{P1:WSR-CAPA}
\begin{align}
\max_{\mathbf{v}_k(\mathbf{s})} 
&   \sum\limits_{k=1}^{K} \log\det\big(\mathbf{I}_{d} +\mathbf{Q}_k\big),\label{P1:capacity} \\
\text{s.t.}\,\, &   \sum\limits_{k=1}^K\int_{\mathcal{S}_{\mathrm{B}}}{\| \mathbf{v}_k(\mathbf{s}) \|^2} \mathrm{d}\mathbf{s}\le C_{\max},\label{P1:power constraint}\\
&\eqref{eq:theorem1 Q},\eqref{eq:theorem1 R},\eqref{eq:theorem1 a},\nonumber
\end{align}
\end{subequations}
where $C_{\max}$ denotes the maximum current budget at the BS.

\section{Functional WMMSE Algorithm}
In this section, we develop a functional WMMSE algorithm for beamforming design. Instead of solving~\eqref{P1:WSR-CAPA} directly, we first reformulate it as an equivalent sum-MSE minimization problem and then represent the continuous functions in the equivalent problem with complete orthonormal basis expansions. These reformulations convert the original functional sum-rate maximization problem into an equivalent optimization over coefficient matrices, facilitating the derivation of the proposed algorithm.

\subsection{Equivalent MSE Minimization Problem}
We first reformulate problem~\eqref{P1:WSR-CAPA} as an equivalent sum-MSE minimization problem. Specifically, by applying the combining function $\mathbf{u}_k(\mathbf{r})\in \mathbb{C} ^{1\times d}$ to the received signal, user $k$ estimates its transmitted signal as
\begin{equation} \label{eq:estimated signal}
\hat{\mathbf{x}}_k= \int_{\mathcal{S}^k _{\mathrm{U}}}{\mathbf{u}_{k}^{\mathsf H}\left( \mathbf{r} \right) y_k\left( \mathbf{r} \right)}\mathrm{d}\mathbf{r}.
\end{equation}
With~\eqref{eq:estimated signal}, the MSE matrix $\mathbf{E}_k$ can be derived as
\begin{equation}\label{eq:MSE matrix}
\begin{split}
\raisetag{11.0ex}
&\mathbf{E}_k=\mathbb{E}_{\mathbf{x},n}\!\left[
 (\hat{\mathbf{x}}_k-\mathbf{x}_k)(\hat{\mathbf{x}}_k-\mathbf{x}_k)^{\mathsf H}\right]
\\
&\triangleq\mathbf{I}_d\!-\!\mathbf{B}_{kk}\!-\!\mathbf{B}_{kk}^{\mathsf H}\!+  \sum_{j=1}^K\mathbf{B}_{kj}\mathbf{B}_{kj}^{\mathsf H}\!+\!\sigma_n^{2}\int_{\mathcal{S} _{\mathrm{U}}^{k}}\!\mathbf{u}_k^{\mathsf H}(\mathbf r)\mathbf{u}_k(\mathbf r)\,\mathrm d\mathbf r,
\end{split}
\end{equation}
where
\begin{equation}\label{eq:equation of B}
    \mathbf{B}_{kj}=  \int_{\mathcal{S} _{\mathrm{U}}^{k}}{\mathbf{u}_{k}^{\mathsf H}\left( \mathbf{r} \right) \mathbf{a}_{kj}(\mathbf{r})}\mathrm{d}\mathbf{r}.
\end{equation}
\begin{proposition} \label{proposition:MSE}
Problem~\eqref{P1:WSR-CAPA} is equivalent to the following problem, in the sense that the globally optimal solution $\mathbf{v}_k(\mathbf{s})$ is identical for both problems.
\begin{subequations}\label{P2:MMSE}
\begin{align}
\!\!&\min_{\{\mathbf{W}_k,\mathbf{u}_k(\mathbf{r}),\mathbf{v}_k(\mathbf{s})\}}   \sum\limits _{k=1}^K{ \mathrm{Tr}\left( \mathbf{W}_k\mathbf{E}_k \right) -\log\det \!\left( \mathbf{W}_k \right)}\label{P2:object} \\
&\quad\quad\,\,\,\,\,\mathrm{s.t.}\quad
 \eqref{eq:MSE matrix},\eqref{eq:equation of B},\eqref{eq:theorem1 a}, \eqref{P1:power constraint},\nonumber
\end{align}
\end{subequations}
where $\mathbf{W}_k\succeq\mathbf{0}$ is the weight matrix of user $k$.
\end{proposition}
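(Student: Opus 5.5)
The plan is to follow the classical WMMSE equivalence argument of Shi et al. and lift it to the functional setting: minimize \eqref{P2:object} over $\mathbf{u}_k(\mathbf r)$ and $\mathbf{W}_k$ in closed form, with $\{\mathbf v_k(\mathbf s)\}$ held fixed. Substituting the minimizers back should give an objective equal to $-\sum_k\log\det(\mathbf I_d+\mathbf Q_k)$ plus a constant. Since the feasible set for $\mathbf v_k(\mathbf s)$, namely \eqref{P1:power constraint}, is the same in both problems, the two problems then share the same globally optimal beamformers.

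\textbf{Step 1 (optimal combiner).} For fixed $\mathbf W_k\succ\mathbf 0$ and $\mathbf v_k$, the term $\mathrm{Tr}(\mathbf W_k\mathbf E_k)$ is a convex quadratic functional of $\mathbf u_k(\mathbf r)$. I will set its first variation to zero, i.e. perturb $\mathbf u_k\to\mathbf u_k+\epsilon\bm\phi$ for arbitrary $\bm\phi$. This yields the integral equation
\begin{equation*}
\int_{\mathcal S_{\mathrm U}}\mathrm J_k(\mathbf r,\mathbf r')\mathbf u_k(\mathbf r')\,\mathrm d\mathbf r'=\mathbf a_{kk}(\mathbf r),
\end{equation*}
where $\mathrm J_k(\mathbf r_1,\mathbf r_2)=\mathrm J_{\bar k}(\mathbf r_1,\mathbf r_2)+\mathbf a_{kk}(\mathbf r_1)\mathbf a_{kk}^{\mathsf H}(\mathbf r_2)$. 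By Definition~\ref{Definition:Inverse of continuous kernel}, the solution is $\mathbf u_k^\star(\mathbf r)=\int\mathrm J_k^{-1}(\mathbf r,\mathbf r')\mathbf a_{kk}(\mathbf r')\mathrm d\mathbf r'$. The kernel inverse exists because of the $\sigma_n^2\delta$ term, which makes the operator a finite-rank perturbation of $\sigma_n^2$ times the identity.

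\textbf{Step 2 (MMSE matrix).} Substituting $\mathbf u_k^\star$ into \eqref{eq:MSE matrix} gives $\mathbf E_k^\star=\mathbf I_d-\iint\mathbf a_{kk}^{\mathsf H}\mathrm J_k^{-1}\mathbf a_{kk}$. I then need a functional Woodbury identity. Writing $\mathrm J_k^{-1}$ in terms of $\mathrm J_{\bar k}^{-1}$ via the rank-$d$ update, and verifying it directly against \eqref{eq:inverse_of_kernel}, should give
\begin{equation*}
\mathbf E_k^\star=(\mathbf I_d+\mathbf Q_k)^{-1},
\end{equation*}
with $\mathbf Q_k$ as in \eqref{eq:theorem1 Q}. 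The key manipulation is the push-through identity $\mathbf I_d-\mathbf Q_k(\mathbf I_d+\mathbf Q_k)^{-1}=(\mathbf I_d+\mathbf Q_k)^{-1}$. This step is essentially the computation already carried out in the proof of Proposition~\ref{proposition_capacity} (Appendix~\ref{appendix_capacity}), so I would reuse it there if possible.

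\textbf{Step 3 (optimal weight).} For fixed $\mathbf E_k\succ\mathbf 0$, the objective $\mathrm{Tr}(\mathbf W_k\mathbf E_k)-\log\det\mathbf W_k$ is convex in $\mathbf W_k$. Setting the gradient to zero gives $\mathbf W_k^\star=\mathbf E_k^{-1}$ and value $d+\log\det\mathbf E_k$. At $\mathbf E_k^\star$ this equals $d-\log\det(\mathbf I_d+\mathbf Q_k)$. Hence
\begin{equation*}
\min_{\mathbf W,\mathbf u}\eqref{P2:object}=Kd-R(\{\mathbf v_k\}),
\end{equation*}
and the partial minimization is exact. It follows that $\min_{\mathbf v}\min_{\mathbf W,\mathbf u}$ of \eqref{P2:object} and $\max_{\mathbf v}R$ share their optimizers.

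\textbf{Main obstacle.} I expect the main difficulty to be the rigorous functional Woodbury step in Step 2. This includes justifying the existence and uniqueness of $\mathrm J_k^{-1}$ on $\mathcal S_{\mathrm U}$. It also requires handling the fact that $h_k$ vanishes outside $\mathcal S^k_{\mathrm U}$, so that integrals over $\mathcal S_{\mathrm U}^k$ and over $\mathcal S_{\mathrm U}$ agree. I would handle both by decomposing the kernel as $\sigma_n^2\delta$ plus a finite-rank term and verifying the candidate inverse explicitly.
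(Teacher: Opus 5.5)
Your proposal is correct and follows the paper's proof essentially step for step. You obtain $\mathbf u_k$ in closed form through the inverse kernel $\mathrm J_k^{-1}$, set $\mathbf W_k=\mathbf E_k^{-1}$, and apply the functional Woodbury identity (Lemma~\ref{lem:woodbury}) to $\mathrm J_k=\mathrm J_{\bar k}+\mathbf a_{kk}\mathbf a_{kk}^{\mathsf H}$ to get $(\mathbf E_k^{\mathrm{opt}})^{-1}=\mathbf I_d+\mathbf Q_k$. Your remarks on convexity in $\mathbf u_k$, on invertibility of $\mathrm J_k$ as $\sigma_n^2$ times the identity plus a finite-rank positive semidefinite term, and on integrals over $\mathcal S_{\mathrm U}^k$ versus $\mathcal S_{\mathrm U}$ supply justifications the paper leaves implicit.
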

\begin{IEEEproof}
See Appendix~\ref{appendix:MSE}.
\end{IEEEproof}

For the beamforming functions $\mathbf{v}_k\left( \mathbf{s} \right)$, $\forall k$, let $\bm{\upbeta}\left( \mathbf{s} \right)\!=\!\left[ \beta_1\left( \mathbf{s} \right) ,\ldots ,\beta_{N_s}\left( \mathbf{s} \right) \right]\!\in\! \mathbb{C} ^{1\times N_s}$ denote orthonormal basis functions with $N_s\!\to\!\infty$, which satisfy the orthonormality~condition
\begin{equation} \label{eq:beta unity}
     \int_{\mathcal{S} _{\mathrm{B}}}{\bm{\upbeta }^{\mathsf{H}}\left( \mathbf{s} \right) \bm{\upbeta }\left( \mathbf{s} \right)}\mathrm{d}\mathbf{s}=\mathbf{I}_{N_s}.
\end{equation}
Then, each beamforming function $\mathbf{v}_k\left( \mathbf{s} \right)$ can be expressed as a linear combination of these basis functions~\cite{Fourier_series}, i.e.,
\begin{equation} \label{eq:v function eq beta V}
    \mathbf{v}_k\left( \mathbf{s} \right) =\bm{\upbeta }\left( \mathbf{s} \right) \mathbf{V}_k,
\end{equation}
where $\mathbf{V}_k\in \mathbb{C} ^{N_s\times d}$ denotes the coefficient matrix, obtained through the following projection
\begin{equation} \label{eq:V eq beat v function}
\mathbf{V}_k= \int_{\mathcal{S} _{\mathrm{B}}}{\bm{\upbeta }^{\mathsf{H}}\left( \mathbf{s} \right) \mathbf{v}_k\left( \mathbf{s} \right)}\mathrm{d}\mathbf{s}.
\end{equation}

Similarly, for $\mathbf{u}_k\left( \mathbf{r} \right)$, consider another set of orthonormal basis functions denoted by $\bm{\upalpha }_k\left( \mathbf{r} \right) =\left[ \upalpha _{1}^{k}\left( \mathbf{r} \right) ,\ldots ,\upalpha _{N_r}^{k}\left( \mathbf{r} \right) \right] \in \mathbb{C} ^{1\times N_r}$ with $N_r\!\to\!\infty$, which satisfy
\begin{equation} \label{eq:alpha unity}
     \int_{\mathcal{S} _{\mathrm{U}}^{k}}{\bm{\upalpha }_{k}^{\mathsf H}\left( \mathbf{r} \right) \bm{\upalpha }_{k}\left( \mathbf{r} \right)}\mathrm{d}\mathbf{r}=\mathbf{I}_{N_r}.
\end{equation}
Accordingly, $\mathbf{u}_k\left( \mathbf{r} \right)$ can be represented as
\begin{equation} \label{eq:u function eq U}
    \mathbf{u}_k\left( \mathbf{r} \right) =\bm{\upalpha }_k\left( \mathbf{r} \right) \mathbf{U}_k,
\end{equation}
where $\mathbf{U}_k\in \mathbb{C} ^{N_r\times d}$ is the corresponding coefficient matrix, expressed as
\begin{equation} \label{eq:U eq u function}
\mathbf{U}_k= \int_{\mathcal{S}^k _{\mathrm{U}}}{\bm{\upalpha }_{k}^{\mathsf H}\left( \mathbf{r} \right) \mathbf{u}_k\left( \mathbf{r} \right)}\mathrm{d}\mathbf{r}.
\end{equation}

Since $\bm{\upbeta }\left( \mathbf{s} \right)$ and $\bm{\upalpha }_k\left( \mathbf{r} \right)$ form complete orthonormal sets over $\mathcal{S}_{\mathrm{B}}$ and $\mathcal{S} _{\mathrm{U}}^{k}$, respectively, the continuous channel kernel $h_k\left( \mathbf{r},\mathbf{s} \right)$ can be represented as~\cite{Beamforming_Design}
\begin{equation} \label{eq:h function eq ahb}
    h_k\left( \mathbf{r},\mathbf{s} \right) =\bm{\upalpha }_k\left( \mathbf{r} \right) \mathbf{H}_k\bm{\upbeta }^{\mathsf{H}}\left( \mathbf{s} \right), 
\end{equation}
where $\mathbf{H}_k\!\in\! \mathbb{C} ^{N_r\times N_s}$ is the channel coefficient matrix, expressed as
\begin{equation}\label{eq:H eq h function}
\mathbf{H}_k= \int_{\mathcal{S} _{\mathrm{U}}^{k}}{\int_{\mathcal{S} _{\mathrm{B}}}{\bm{\upalpha }_{k}^{\mathsf H}\left( \mathbf{r} \right) h_k\left( \mathbf{r},\mathbf{s} \right) \bm{\upbeta }\left( \mathbf{s} \right)}}\mathrm{d}\mathbf{s}\mathrm{d}\mathbf{r}.
\end{equation}

Substituting~\eqref{eq:v function eq beta V},~\eqref{eq:u function eq U}, and~\eqref{eq:h function eq ahb} into problem~\eqref{P2:MMSE} yields
\begin{subequations}\label{P4:discret MSE}
\begin{align}
\!\! &\min_{\mathbf{W}_k,\mathbf{U}_k,\mathbf{V}_k} \sum_{k=1}^K{\left( \mathrm{Tr}\left( \mathbf{W}_k\mathbf{E}_k \right) -\log\det \!\left( \mathbf{W}_k \right) \right)}\label{P4:objective}\\
&\,\,\,\,\,\,\,\,\,\,\text{s.t.}\quad\mathbf{E}_k=\mathbf{I}_{d}\!-\!\mathbf{B}_{kk}\!-\!\mathbf{B}_{kk}^{\mathsf{H}}\nonumber\\
&\quad\quad\quad\quad\quad\quad\quad+\sum\limits_{j=1}^K{\mathbf{B}_{kj}\mathbf{B}_{kj}^{\mathsf{H}}}\!+\sigma_n^{2}\mathbf{U}_{k}^{\mathsf{H}}{\mathbf{U}}_k,\label{eq:MSE matrix matrix}\\
&\,\,\,\,\,\,\,\,\,\,\quad\quad\,\,\mathbf{B}_{kj}=\mathbf{U}_{k}^{\mathsf{H}}\,\mathbf{H}_k\,\mathbf{V}_j,\label{eq:equation of B matrix}\\
&\,\,\,\,\,\,\,\,\,\,\quad\quad\,\, \sum_{k=1}^K{\mathrm{Tr}\left( \mathbf{V}_{k}^{\mathsf{H}}\mathbf{V}_k \right)}\le C_{\max}.
\end{align}
\end{subequations}
Problem~\eqref{P4:discret MSE} optimizes the coefficient matrices $\mathbf{V}_k$ and $\mathbf{U}_k$ together with the weight matrices $\mathbf{W}_k$. Although this problem can be solved by the conventional WMMSE algorithm, the computational complexity becomes prohibitive as $N_s \!\to\!\infty$ and $N_r \!\to\!\infty$. Therefore, instead of directly optimizing $\mathbf{V}_k$ and $\mathbf{U}_k$, the next subsection derives their optimality conditions and uses them to construct the corresponding optimal functions $\mathbf{v}_k(\mathbf{s})$ and $\mathbf{u}_k(\mathbf{r})$.

\subsection{Derivation of the Functional WMMSE Algorithm}
This subsection derives the update equations for $\mathbf{u}_k(\mathbf{r})$, $\mathbf{W}_k$, and $\mathbf{v}_k(\mathbf{s})$.

\subsubsection{Update of $\mathbf{u}_k(\mathbf{r})$}
From problem~\eqref{P4:discret MSE}, the first-order optimality condition for $\mathbf{U}_k$ is
\begin{equation} \label{eq:optimal condition of U}
\bigg( \sum_{j=1}^K{\mathbf{H}_k\mathbf{V}_j\mathbf{V}_{j}^{\mathsf{H}}\mathbf{H}_{k}^{\mathsf{H}}+\sigma_{n}^{2}\mathbf{I}_{N_r}} \bigg) \mathbf{U}_k-\mathbf{H}_k\mathbf{V}_k=0.
\end{equation}
Multiplying both sides of~\eqref{eq:optimal condition of U} by $\bm{\upalpha }_{k}\left( \mathbf{r}_1 \right)$ yields 
\begin{equation}\label{eq:optimal condition of U1}
\!\!\bm{\upalpha}_{k}\!\left(\! \mathbf{r}_1 \!\right) \mathbf{H}_k\!\mathbf{V}_k\!=\!\bm{\upalpha}_{k}\!\left(\!\mathbf{r}_1\!\right)\!\bigg(\!  \sum_{j=1}^K\!{\mathbf{H}_k\mathbf{V}_j\mathbf{V}_{j}^{\mathsf{H}}\mathbf{H}_{k}^{\mathsf{H}}\!+\!\sigma_n^{2} \mathbf{I}_{N_r}} \!\bigg)\!\mathbf{U}_k.
\end{equation}
Using the orthonormality condition in~\eqref{eq:beta unity}, the term $\bm{\upalpha}_{k}\!\left( \mathbf{r}_1 \right) \mathbf{H}_k\!\mathbf{V}_k$ can be rewritten as 
\begin{equation} \label{eq:a function eq ahv}
\begin{aligned}
\bm{\upalpha }_{k}\left( \mathbf{r}_1 \right) \mathbf{H}_k\mathbf{V}_j&= \int_{\mathcal{S} _{\mathrm{B}}}\underset{h_k\left( \mathbf{r}_1,\mathbf{s} \right)}{\underbrace{\bm{\upalpha }_{k}\left( \mathbf{r}_1 \right) \mathbf{H}_k{\bm{\upbeta }\left( \mathbf{s} \right)^{\mathsf{H}}} }}\underset{\mathbf{v}_j\left( \mathbf{s} \right)}{\underbrace{\bm{\upbeta }\left( \mathbf{s} \right)\mathbf{V}_j}}\mathrm{d}\mathbf{s}\\
&= \int_{\mathcal{S} _{\mathrm{B}}}{h_k\left( \mathbf{r}_1,\mathbf{s} \right) \mathbf{v}_j\left( \mathbf{s} \right)}\mathrm{d}\mathbf{s}\triangleq\mathbf{a}_{kj}\left( \mathbf{r}_1 \right),
\end{aligned}
\end{equation}
where the second equality follows from~\eqref{eq:h function eq ahb} and~\eqref{eq:v function eq beta V}. Substituting~\eqref{eq:a function eq ahv} into~\eqref{eq:optimal condition of U1} gives
\begin{equation}
\begin{split}\label{eq:optimal condition of U2}
%\raisetag{3.0ex}
\mathbf{a}_{kk}(\mathbf r_1)\!=\!\!\bigg(\! \sum_{j=1}^{K}\mathbf{a}_{kj}(\mathbf r_1)\mathbf V_j^{\mathsf H}\mathbf H_k^{\mathsf H}\!+\!\sigma_n^{2} \bm{\upalpha}_{k}(\mathbf r_1)\bigg)\mathbf I_{N_r}\mathbf U_k.
\end{split}
\end{equation}
With~\eqref{eq:alpha unity}, the right-hand side of~\eqref{eq:optimal condition of U2} can be further expressed~as
\begin{equation}
\begin{split}\label{eq:optimal condition of U3}
\raisetag{5.0ex}
&\int_{\mathcal S_{\mathrm U}^k}\!\!\!\bigg(\!\sum_{j=1}^{K}\mathbf{a}_{kj}(\mathbf r_1)\underset{\mathbf{a}_{kj}^{\mathsf{H}}\!\left( \mathbf{r} \right)}{\underbrace{\mathbf{V}_{j}^{\mathsf{H}}\mathbf{H}_{k}^{\mathsf{H}}\bm{\upalpha }_{k}^{\mathsf{H}}(\mathbf{r})}}\!+\!\sigma_n^{2} \bm{\upalpha}_{k}(\mathbf r_1) \bm{\upalpha}_{k}^{\mathsf H}(\mathbf r)\bigg)
 \bm{\upalpha}_{k}(\mathbf r)\mathbf U_k\mathrm d\mathbf r\\
&= \int_{\mathcal{S}_{\mathrm{U}}^{k}}{\bigg( \sum_{j=1}^K{\mathbf{a}_{kj}\left(\mathbf{r}_1\right) \mathbf{a}_{kj}^{\mathsf H}\left( \mathbf{r} \right) +\sigma_n^{2}\delta \left( \mathbf{r}_1-\mathbf{r} \right)} \bigg)\!\mathbf{u}_{k}\left( \mathbf{r} \right)}\mathrm{d}\mathbf{r},
\end{split}
\end{equation}
where $ \bm{\upalpha}_{k}(\mathbf r_1) \bm{\upalpha}_{k}^{\mathsf H}(\mathbf r)=\delta \left( \mathbf{r}_1-\mathbf{r} \right)$ holds due to the orthonormality of $ \bm{\upalpha}_{k}(\mathbf r)$, and $ \bm{\upalpha}_{k}(\mathbf r)\mathbf U_k$ equals $\mathbf{u}_{k}\!\left( \mathbf{r} \right)$ by~\eqref{eq:u function eq U}.

Using~\eqref{eq:optimal condition of U3} to replace the right-hand side of~\eqref{eq:optimal condition of U2}, the functional optimality condition for $\mathbf{u}_k(\mathbf{r})$ is
 \begin{equation}
\begin{aligned}\label{eq:functional optimal condition of U}
\!\!\!\mathbf{a}_{kk}(\mathbf r_1)\!\!=\!\! \int_{\mathcal{S}_{\mathrm{U}}^{k}}\!\!{\bigg(\! \sum_{j=1}^K{\!\mathbf{a}_{kj}\!\left(\!\mathbf{r}_1\!\right) \mathbf{a}_{kj}^{\mathsf H}\!\left( \mathbf{r} \right) \!+\!\sigma_n^{2}\delta \left( \mathbf{r}_1\!\!-\!\!\mathbf{r} \right)}\! \bigg)\mathbf{u}_{k}\!\left( \mathbf{r} \right)}\mathrm{d}\mathbf{r}.
\end{aligned}
\end{equation}
Defining $\mathrm{J}_{k}(\mathbf{r}_1,\mathbf{r})\triangleq \sum\limits_{j=1}^K{\mathbf{a}_{kj}\left(\!\mathbf{r}_1\right) \mathbf{a}_{kj}^{\mathsf H}\!\left( \mathbf{r} \right) +\sigma_n^{2}\delta \left( \mathbf{r}_1-\mathbf{r} \right)}$,~\eqref{eq:functional optimal condition of U} can be rewritten as $\mathbf{a}_{kk}(\mathbf r_1)= \int_{\mathcal{S}_{\mathrm{U}}^{k}}\mathrm{J}_{k}(\mathbf{r}_1,\mathbf{r}){\mathbf{u}_{k}\left( \mathbf{r} \right)}\mathrm{d}\mathbf{r}$, from which $\mathbf{u}_{k}\!\left( \mathbf{r} \right)$ is obtained as
\begin{equation}\label{eq:functional update u}
    \mathbf{u}_{k}\left( \mathbf{r} \right) = \int_{\mathcal{S} _{\mathrm{U}}^{k}}{\mathrm{J}_{k}^{-1}\left( \mathbf{r},\mathbf{r}_1 \right)}\mathbf{a}_{kk}\left( \mathbf{r}_1 \right) \mathrm{d}\mathbf{r}_1,
\end{equation}
where $\mathrm{J}_{k}^{-1}\left( \mathbf{r},\mathbf{r}_1 \right)$ denotes the inverse of $\mathrm{J}_{k}\!\left( \mathbf{r}_1,\mathbf{r}_2 \right)$, as defined in Definition~\ref{Definition:Inverse of continuous kernel}.

\subsubsection{Update of $\mathbf{W}_k$}
From problem~\eqref{P4:discret MSE}, the optimal $\mathbf{W}_k$ is
\begin{equation} \label{eq:discrete update W}
\mathbf{W}_k=\left( \mathbf{I}_{d}-\mathbf{U}_{k}^{\mathsf{H}}\mathbf{H}_k\mathbf{V}_k \right) ^{-1}.
\end{equation}
With~\eqref{eq:alpha unity},~\eqref{eq:discrete update W} can be rewritten as 
\begin{equation}
    \begin{aligned}\label{eq:functional of W}
\mathbf{W}_k&={\bigg( \mathbf{I}_d- \int_{\mathcal{S}_{\mathrm{U}}^{k}}\underset{\mathbf{u}_{k}^{\mathsf H}\left( \mathbf{r} \right)}{\underbrace{\mathbf{U}_{k}^{\mathsf H}{{\bm{\upalpha }_{k}^{\mathsf H}\left( \mathbf{r} \right)}}}} \underset{\mathbf{a}_{kk}\left( \mathbf{r} \right)}{\underbrace{\bm{\upalpha }_{k}\left( \mathbf{r} \right)\mathbf{H}_k\mathbf{V}_k}}\mathrm{d}\mathbf{r} \bigg) ^{-1}},
\\
&={\bigg( \mathbf{I}_d- \int_{\mathcal{S} _{\mathrm{U}}^{k}}\mathbf{u}_{k}^{\mathsf H}\left( \mathbf{r} \right) \mathbf{a}_{kk}\left( \mathbf{r} \right)\mathrm{d}\mathbf{r} \bigg) ^{-1}}, 
    \end{aligned}
\end{equation}
where the second equality follows from~\eqref{eq:u function eq U} and~\eqref{eq:a function eq ahv}.

\subsubsection{Update of $\mathbf{v}_k(\mathbf{s})$}
Similar to the derivation of ${\mathbf u}_k(\mathbf r)$, the first-order optimality condition for $\mathbf{V}_k$ can be derived as
\begin{equation}\label{eq:optimal condition of V}
   \bigg(\!  \sum_{j=1}^K{\mathbf{H}_{j}^{\mathsf H}\mathbf{U}_j\mathbf{W}_j\mathbf{U}_{j}^{\mathsf H}\mathbf{H}_j}+\mu\mathbf{I} \bigg) \mathbf{V}_k-\mathbf{H}_{k}^{\mathsf H}\mathbf{U}_k\mathbf{W}_k=0,
\end{equation}
where $\mu$ is the Lagrange multiplier introduced to satisfy the constraint in~\eqref{P1:power constraint}, which can be obtained via bisection. 
Multiplying both sides of~\eqref{eq:optimal condition of V} by $\bm{\upbeta }\left( \mathbf{s}_1 \right)$ yields 
\begin{equation}
\begin{aligned}\label{eq:optimal condition of V1}
\bigg(\sum_{j=1}^K\bm{\upbeta }\left( \mathbf{s}_1 \right){\mathbf{H}_{j}^{\mathsf H}\mathbf{U}_j\mathbf{W}_j\mathbf{U}_{j}^{\mathsf H}\mathbf{H}_j}+&\mu\bm{\upbeta }_{k}\left( \mathbf{s}_1 \right)\mathbf{I} \bigg) \mathbf{V}_k\\
&\quad\quad=\bm{\upbeta }\left( \mathbf{s}_1 \right)\mathbf{H}_{k}^{\mathsf H}\mathbf{U}_k.
\end{aligned}
\end{equation}
Using the orthonormality condition in~\eqref{eq:alpha unity}, the term $\bm{\upbeta }\left( \mathbf{s}_1 \right) \mathbf{H}_{k}^{\mathsf H}\mathbf{U}_k$ can be rewritten as
\begin{equation}
\begin{aligned}\label{eq:c function}
\bm{\upbeta }\left( \mathbf{s}_1 \right) \mathbf{H}_{k}^{\mathsf H}\mathbf{U}_k&= \int_{\mathcal{S}_{\mathrm{U}}^{k}}\underset{h_{k}^{\mathsf H}\left( \mathbf{r},\mathbf{s}_1 \right)}{\underbrace{\bm{\upbeta }\left( \mathbf{s}_1 \right) \mathbf{H}_k^{\mathsf H}{{\bm{\upalpha }_{k}^{\mathsf H}\left( \mathbf{r} \right)} }} }\underset{\mathbf{u}_k\left( \mathbf{r} \right)}{\underbrace{\bm{\upalpha }_{k}\left( \mathbf{r} \right)\mathbf{U}_k}}\mathrm{d}\mathbf{r}
\\
&= \int_{\mathcal{S} _{\mathrm{U}}^{k}}{h_{k}^{\mathsf H}\left( \mathbf{r},\mathbf{s}_1 \right) \mathbf{u}_k\left( \mathbf{r} \right)}\mathrm{d}\mathbf{r}\triangleq\mathbf{c}_{k}\left( \mathbf{s}_1 \right).
\end{aligned}
\end{equation}
where the second equality follows from~\eqref{eq:h function eq ahb} and~\eqref{eq:u function eq U}.
Substituting~\eqref{eq:c function} into~\eqref{eq:optimal condition of V1} yields
\begin{equation}\label{eq:optimal condition of V2}
\mathbf{c}_{k}\left( \mathbf{s}_1 \right) \mathbf{W}_k\!\!=\!\!\bigg(\! \sum\limits_{j=1}^K{\mathbf{c}_{j}\left( \mathbf{s}_1 \right) \mathbf{W}_j\mathbf{U}_{j}^{\mathsf H}\mathbf{H}_j}\!+\!\mu\bm{\upbeta }\left( \mathbf{s}_1 \right) \bigg) \mathbf{V}_k.
\end{equation}
Applying~\eqref{eq:beta unity}, the right-hand side of~\eqref{eq:optimal condition of V2} can be further written as 
\begin{equation}
\label{eq:optimal condition of V right hand}
\begin{split}\raisetag{2.5ex}
&\bigg(\sum_{j=1}^{K}\mathbf c_{j}(\mathbf s_1)\mathbf W_j \mathbf U_j^{\mathsf H}\mathbf H_j +\mu\,\bm\upbeta(\mathbf s_1)\bigg)\mathbf{I}_{N_s}\mathbf V_k\\
&=\int_{\mathcal S_{\mathrm B}}\bigg(\!\sum_{j=1}^{K}\!\mathbf c_{j}(\mathbf s_1)\mathbf W_j\underset{\mathbf c^{\mathsf H}_{j}(\mathbf s)}{\underbrace{\mathbf U_j^{\mathsf H}\mathbf H_j\bm\upbeta^{\mathsf H}(\mathbf s)}}+\\
&\quad\quad\quad\quad\quad\mu\bm\upbeta(\mathbf s_1)\bm\upbeta^{\mathsf H}(\mathbf s)\bigg)
\bm\upbeta(\mathbf s)\mathbf V_k\mathrm d\mathbf s.
\end{split}
\end{equation}
Using~\eqref{eq:optimal condition of V right hand} to replace the right-hand side of~\eqref{eq:optimal condition of V2}, the functional optimality condition for $\mathbf{v}_k(\mathbf{s})$ is
\begin{equation}\label{eq:functional optimal condition of v}
\begin{aligned}
    &\mathbf{c}_{k}\left( \mathbf{s}_1 \right) \mathbf{W}_k=\\
&\,\,\, \int_{\mathcal{S}_{\mathrm{B}}}\!\!\!{\bigg(\! \sum_{j=1}^K{\!\mathbf{c}_{j}\left( \mathbf{s}_1 \right) \mathbf{W}_j\mathbf{c}_{j}^{\mathsf H}\left( \mathbf{s} \right)}\!+\!\mu\delta \left( \mathbf{s}_1\!\!-\!\mathbf{s} \right)\!\bigg)\! \mathbf{v}_k\left( \mathbf{s} \right)}\mathrm{d}\mathbf{s}, 
\end{aligned}
\end{equation}
where $\bm\upbeta(\mathbf s_1)\bm\upbeta^{\mathsf H}(\mathbf s)=\delta \left( \mathbf{s}_1-\mathbf{s} \right)$ holds due to the orthonormality of $\bm{\upbeta}(\mathbf s)$, and $\bm{\upbeta}(\mathbf s)\mathbf V_k$ equals $\mathbf{v}_{k}\left( \mathbf{s} \right)$ based on~\eqref{eq:v function eq beta V}.~Defining $\mathrm{T}_k\left( \mathbf{s}_1,\mathbf{s}\right) 
\!\triangleq\!\sum_{j=1}^K{\!\mathbf{c}_{j}\!\left( \mathbf{s}_1 \right)\!\mathbf{W}_j\mathbf{c}_{j}^{\mathsf H}\left( \mathbf{s} \right)}\!+\!\mu\delta \left( \mathbf{s}_1\!-\!\mathbf{s} \right)$,~\eqref{eq:functional optimal condition of v} can be rewritten as $\mathbf{c}_{k}\left( \mathbf{s}_1 \right) \mathbf{W}_k=\int_{\mathcal{S}_{\mathrm{B}}}{\mathrm{T}_k\left( \mathbf{s}_1,\mathbf{s}\right) \mathbf{v}_k\left( \mathbf{s} \right)}\mathrm{d}\mathbf{s}$, from which $\mathbf{v}_k(\mathbf{s})$ is derived~as
\begin{equation}\label{eq:functional-update-v}
    \mathbf{v}_{k}\left( \mathbf{s} \right) = \!\int_{\mathcal{S} _{\mathrm{B}}}\!\!{\mathrm{T}_{k}^{-1}\!\left( \mathbf{s},\mathbf{s}_1 \right)}\mathbf{c}_{k}\left( \mathbf{s}_1 \right)\mathbf{W}_k \mathrm{d}\mathbf{s}_1,
\end{equation}
where ${\mathrm{T}_{k}^{-1}\!\left( \mathbf{s},\mathbf{s}_1 \right)}$ denotes the inverse of ${\mathrm{T}_{k}\!\left( \mathbf{s}_1,\mathbf{s} \right)}$.

The proposed functional WMMSE algorithm is summarized in Table~\ref{tab:functional-wmmse}.
\begin{remark}
Unlike Fourier-based discretization methods, which iteratively optimize finite-dimensional coefficient matrices and then reconstruct the continuous beamforming functions from truncated Fourier expansions~\cite{Pattern_Division, On_the_SE}, the proposed approach yields closed-form functional update equations and updates the beamforming functions directly in the continuous domain. It therefore avoids the approximation errors introduced by truncating the Fourier expansion.
\end{remark}

\begin{table}[t] 
  \caption{Pseudocode of the Proposed Functional WMMSE Algorithm}
  \label{tab:functional-wmmse}
  \centering
  \small
  \setlength{\tabcolsep}{6pt}
  \renewcommand{\arraystretch}{1.15}
  \begin{tabularx}{0.98\linewidth}
  {|>{\hspace{-0.5\tabcolsep}\centering\arraybackslash}p{1.0em}
    @{\hspace{3pt}}
    X<{\hspace{-0.7\tabcolsep}}|}
    \hline
    1 & \rule{0pt}{1.1em}%
        Initialize $\mathbf{v}_k(\mathbf{s})$ such that $\sum_{k=1}^K\int_{\mathcal{S}_{\mathrm{B}}}{\| \mathbf{v}_k(\mathbf{s}) \|^2} \mathrm{d}\mathbf{s}\le C_{\max}$, and set $ \mathbf W_{k}=\mathbf I_{d}$, $\forall\, k$ \\
    2 & \textbf{repeat} \\
    3 & \quad Update $\mathbf{u}_{k}(\mathbf{r})$ with~\eqref{eq:functional update u}, $\, \forall\, k$ \\
    4 & \quad Update $\mathbf W_{k}$ with~\eqref{eq:functional of W}, $\, \forall\, k$ \\
    5 & \quad Update $\mathbf v_{k}(\mathbf{s})$ with~\eqref{eq:functional-update-v}, $\, \forall\, k$ \\
    6 & \textbf{until}\ the change in $\!\sum\limits_{k=1}^K\!\log\det(\mathbf W_{k})$ is less than a tolerance~$\varepsilon$.%
        \rule[-0.2ex]{0pt}{1.1em} \\[2pt]\hline
  \end{tabularx}
 % \vspace{-2pt}
\end{table}

\section{INR-based Learning of Beamforming Function} \label{Training Method}
In this section, we develop an INR-based method to learn the beamforming policy, i.e., the mapping from the channel response function to the beamforming function. Our design is guided by the PE property of this policy and the iterative structure of the functional WMMSE algorithm. In the following, we first establish the PE property, then implement the INR as a GNN to exploit the PE property, and finally design a model-driven GNN update equation inspired by the WMMSE~iterations.

\subsection{Permutation Equivariant INR with GNNs}\label{BeamINR}
Substituting~\eqref{rotation equation} into~\eqref{eq:theorem1 a}, then into~\eqref{eq:theorem1 Q}, and integrating over $\mathcal{S}_\mathrm{U}$, we can find that $\mathbf{Q}_k$ depends on the matrix $\mathbf{P}_o=[\mathbf{p}_1^{\mathsf{T}}, \cdots, \mathbf{p}_K^{\mathsf{T}}]^{\mathsf{T}}\in\mathbb{R}^{K\times 6}$, where $\mathbf{p}_k=[\mathbf{r}_o^{k,\mathsf{T}},\omega_x^k,\omega_y^k,\omega_z^k]\in\mathbb{R}^{1\times 6}$ collects the location and orientation of user $k$. Consequently, the beamformer $\mathbf{V}(\mathbf{s})=[\mathbf{v}^{\mathsf{T}}_{1}(\mathbf{s}), \ldots, \mathbf{v}^{\mathsf{T}}_{K}(\mathbf{s})]^{\mathsf{T}}\in\mathbb{C}^{K\times d}$ is determined by $\mathbf{P}_o$ and $\mathbf{s}$. We therefore define the optimal beamforming policy as the mapping from $(\mathbf{s},\mathbf{P}_o)$ to the optimal beamforming at point $\mathbf{s}$, which is denoted as
\begin{equation}\label{optimal_policy}
\mathbf{V}^{\star}(\mathbf{s})=\mathcal{F}_b(\mathbf{s}, \mathbf{P}_o),\quad \mathbf{s} \in \mathcal{S}_{\mathrm{B}},
\end{equation}
where $\mathbf{V}^{\star}(\mathbf{s})$ denotes the optimal beamforming at point $\mathbf{s}$ for the input $(\mathbf{s},\mathbf{P}_o)$.

The PE property of a policy means that a permutation of the input indices leads to the same permutation of the output indices, while leaving the objective and constraints of the problem unchanged. Let $\mathcal{H}(\mathbf{r},\mathbf{s})=[h_1(\mathbf{r},\mathbf{s}),\cdots,h_K(\mathbf{r},\mathbf{s})]$ denote the set of channel kernels in problem~\eqref{P1:WSR-CAPA}. 
Permuting the user indices of $\mathbf{P}_o$ induces the same permutation on the indices of $\mathcal{H}(\mathbf{r},\mathbf{s})$. Moreover, when the user indices of \(\mathcal{H}(\mathbf{r},\mathbf{s})\) and \(\mathbf{V}(\mathbf{s})\) are
permuted consistently, the sum rate and the constraints in~\eqref{P1:WSR-CAPA} remain unchanged. Thus, the optimal
policy in~\eqref{optimal_policy} is permutation equivariant with respect to
the user dimension,~i.e.,
\begin{equation}\label{1D-PE}
   \boldsymbol{\Pi}^{\mathsf{T}}\mathbf{V}^{\star}(\mathbf{s})=\mathcal{F}_b\big(\mathbf{s}, \boldsymbol{\Pi}^{\mathsf{T}}\mathbf{P}_{o}\big),
\end{equation}
where $\boldsymbol{\Pi}\!\in\!\mathbb{R}^{K\times K}$ is an arbitrary permutation matrix on the user~indices.

To parameterize the policy in~\eqref{optimal_policy}, we employ an INR given~by
\begin{equation}\label{eq:BeamINR}
   \mathbf{V}(\mathbf{s}) = \mathcal{P}_{\theta}(\mathbf{s}, \mathbf{P}_o),\quad\mathbf{s} \in \mathcal{S}_{\mathrm{B}},
\end{equation}
where $\mathcal{P}_{\theta}(\cdot)$ denotes the INR parameterized by $\theta$. To exploit the PE property in~\eqref{1D-PE}, we implement $\mathcal{P}_{\theta}(\cdot)$ as a GNN on a vertex graph, as illustrated in Fig.~\ref{Undirectional_Graph}. The graph consists of $K$ user vertices with pairwise edges. Vertex $k$ takes the geometry vector $\mathbf{p}_k$ and the spatial coordinate $\mathbf{s}$ as its feature and outputs $\mathbf{v}_k(\mathbf{s})$. No features or outputs are assigned to the edges.
%\vspace{-.5cm}
\begin{figure}%[htbp]
\centering
 \includegraphics[width=0.4\textwidth]{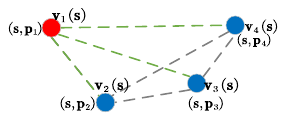}
% \vspace{-0.3cm}
\caption{Illustration of the undirected graph with $K=4$.}  \label{Undirectional_Graph}
\end{figure}

A conventional GNN updates vertex representations by aggregating information from neighboring vertices and combining it with the representation of the target vertex. For vertex $k$, the hidden representation in the $(l+1)$-th layer ${\mathbf{d}}_{k}^{(l+1)}(\mathbf{s}) = [d_{k,1}^{(l+1)}(\mathbf{s}), \dots, d_{k,C_{l+1}}^{(l+1)}(\mathbf{s})]^\mathsf{T}$ is updated as
\begin{equation}\label{Update function of GNN}
   \mathbf{d}_{k}^{(l+1)}(\mathbf{s})\!=\!\sigma \Bigg( \underset{\text{combination}}{\underbrace{\mathbf{S}^{(l)}{\mathbf{d}}_{k}^{(l)}(\mathbf{s})+\overset{\text{aggregation}}{\overbrace{\mathbf{W}^{(l)}\sum_{i=1,i\ne k}^K{\mathbf{d}_{i}^{(l)}}(\mathbf{s})}}}} \Bigg),
\end{equation}	
where $C_{l}$ is the representation dimension in the $l$-th layer, the neighboring representations ${{\mathbf d}}_{i}^{(l)}(\mathbf{s}), \forall i\neq k$, are aggregated through the trainable matrix ${\mathbf W}^{(l)}\in\mathbb{C}^{C_{l+1}\times C_{l}}$, and the resulting aggregated information is combined with the representation of vertex $k$, namely ${\mathbf{d}}_{k}^{(l)}(\mathbf{s})$, through $\mathbf{S}^{(l)}\in\mathbb{C}^{C_{l+1}\times C_{l}}$. $\sigma(\cdot)$ is an element-wise activation function, and the initial and final hidden representations, ${\mathbf{d}}_{k}^{(0)}(\mathbf{s})$ and ${\mathbf{d}}_{k}^{(L)}(\mathbf{s})$, correspond to the vertex features and outputs, respectively, where $L$ is the total number of GNN layers.

In~\eqref{Update function of GNN}, the aggregation and combination processes have the following characteristics.
\begin{itemize}
\item[1)] In the aggregation term, the neighboring hidden representations $\mathbf{d}_{i}^{(l)}(\mathbf{s})$, $\forall i\neq k$, are aggregated using a shared trainable matrix $\mathbf{W}^{(l)}$ that cannot differentiate the contributions of different neighbors.

\item[2)] In the combination term, ${\mathbf{d}}_{k}^{(l)}(\mathbf{s})$ is combined with the aggregated information through $\mathbf{S}^{(l)}$.

\item[3)] The update of $\mathbf{d}_{k}^{(l+1)}(\mathbf{s})$ uses only the information at point $\mathbf{s}$ from layer $l$.
\end{itemize}

Although the conventional GNN update is PE, this property alone does not guarantee generalization across different system sizes. As analyzed in~\cite{Recursive_GNNs} for SPDA beamforming, incorporating mathematical models into the GNN architecture can improve generalization. We therefore analyze the relationship between the GNN update and our functional WMMSE algorithm to develop a model-driven INR layer.

\subsection{Design of the Update Equation} 
To connect the functional WMMSE algorithm to the GNN's update equation, we first express the beamforming iteration from Table~\ref{tab:functional-wmmse} in an explicit recursive form. The following proposition states this recursion, defining each beamforming iterate as a function of the previous one.

\begin{proposition}\label{prop:wmmse_recursion}
Given the beamforming functions $\{\mathbf v_k^{(l)}(\mathbf s)\}_{k=1}^K$ at iteration $l$, the beamforming iteration equation in the proposed functional WMMSE algorithm can be equivalently written as
\begin{equation} \label{eq:iterative equation of vk2}
\begin{split}\raisetag{2.2cm}
&\mathbf{v}_{k}^{\left( l+1 \right)}\!\left( \mathbf{s} \right)
=\int_{\mathcal{S}_{\mathrm{U}}^{k}}h_{k}^{\mathsf{H}}\left( \mathbf{r},\mathbf{s} \right) \mathbf{a}^{(l)}_{kk}\left( \mathbf{r}\right)\mathrm{d}\mathbf{r}\mathbf{\Theta}_{k}\\
&\quad\quad\quad\quad\quad\quad+\sum_{i=1}^K\sum_{\substack{j=1\\ (i,j)\ne (k,k)}}^K
\int_{\mathcal{S} _{\mathrm{U}}^{i}}h_{i}^{\mathsf{H}}\left( \mathbf{r},\mathbf{s} \right) \mathbf{a}^{(l)}_{ij}\left( \mathbf{r} \right)\mathrm{d}\mathbf{r}\mathbf{\Sigma}^k_{ij},
\end{split}
\end{equation}
where $\mathbf{a}^{(l)}_{ij}\!\left( \mathbf{r} \right)\!\!=\!\!\int_{\mathcal{S}_{\mathrm{B}}}
\!\!h_i\left( \mathbf{r},\mathbf{s}_1 \right)\mathbf{v}_{j}^{(l)}\!\!\left( \mathbf{s}_1 \right)\mathrm{d}\mathbf{s}_1$ and $\mathbf{\Theta}_{k}, \mathbf{\Sigma}^k_{ij}\!\in\!\mathbb{C}^{d\times d}$.
\end{proposition}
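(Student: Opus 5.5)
We work entirely with the coefficient-matrix form of the iteration and map back to the function domain only at the end, exactly as in the derivations of \eqref{eq:functional update u} and \eqref{eq:functional-update-v}. Write $\mathbf{v}_k^{(l)}(\mathbf{s})=\bm{\upbeta}(\mathbf{s})\mathbf{V}_k^{(l)}$. One pass of Table~\ref{tab:functional-wmmse} then splits into three steps:
\begin{itemize}
\item compute $\mathbf{U}_k$ from \eqref{eq:optimal condition of U};
\item compute $\mathbf{W}_k$ from \eqref{eq:discrete update W};
\item compute $\mathbf{V}_k^{(l+1)}$ from \eqref{eq:optimal condition of V}.
\end{itemize}

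\textbf{Step 1: receive-side quantities.} Define $\mathbf{A}_k=\sum_j\mathbf{H}_k\mathbf{V}_j^{(l)}\mathbf{V}_j^{(l)\mathsf H}\mathbf{H}_k^{\mathsf H}+\sigma_n^2\mathbf{I}$, so that $\mathbf{U}_k=\mathbf{A}_k^{-1}\mathbf{H}_k\mathbf{V}_k^{(l)}$. The push-through (Woodbury) identity removes the $N_r\times N_r$ inverse. With $\mathbf{G}_k=[\mathbf{H}_k\mathbf{V}_1^{(l)},\dots,\mathbf{H}_k\mathbf{V}_K^{(l)}]$,
\begin{equation*}
\mathbf{A}_k^{-1}\mathbf{G}_k=\mathbf{G}_k\big(\mathbf{G}_k^{\mathsf H}\mathbf{G}_k+\sigma_n^2\mathbf{I}_{Kd}\big)^{-1}.
\end{equation*}
Hence $\mathbf{U}_k=\sum_{j}\mathbf{H}_k\mathbf{V}_j^{(l)}\mathbf{\Phi}_{kj}$. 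Here the $d\times d$ blocks $\mathbf{\Phi}_{kj}$ come from the $(j,k)$ block column of $(\mathbf{G}_k^{\mathsf H}\mathbf{G}_k+\sigma_n^2\mathbf{I})^{-1}$. Their entries are the Gram matrices $\mathbf{V}_i^{(l)\mathsf H}\mathbf{H}_k^{\mathsf H}\mathbf{H}_k\mathbf{V}_j^{(l)}=\int\mathbf{a}_{ki}^{(l)\mathsf H}\mathbf{a}_{kj}^{(l)}\mathrm d\mathbf r$, so they are finite-dimensional.

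Next, $\mathbf{W}_k=(\mathbf{I}_d-\mathbf{U}_k^{\mathsf H}\mathbf{H}_k\mathbf{V}_k^{(l)})^{-1}$ is also a $d\times d$ matrix built from these Gram matrices. Mapping back, $\bm{\upalpha}_k(\mathbf r)\mathbf{H}_k\mathbf{V}_j^{(l)}=\mathbf{a}_{kj}^{(l)}(\mathbf r)$ by \eqref{eq:a function eq ahv}, which gives
\begin{equation*}
\mathbf{u}_k(\mathbf r)=\sum_j\mathbf{a}_{kj}^{(l)}(\mathbf r)\mathbf{\Phi}_{kj}.
\end{equation*}

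\textbf{Step 2: transmit-side update.} Define $\mathbf{F}_j=\mathbf{H}_j^{\mathsf H}\mathbf{U}_j\mathbf{W}_j^{1/2}$ and $\mathbf{F}=[\mathbf{F}_1,\dots,\mathbf{F}_K]$. Then \eqref{eq:optimal condition of V} reads $(\mathbf{F}\mathbf{F}^{\mathsf H}+\mu\mathbf I)\mathbf V_k=\mathbf F_k\mathbf W_k^{1/2}$. Applying push-through again,
\begin{equation*}
\mathbf{V}_k^{(l+1)}=\mathbf{F}(\mathbf{F}^{\mathsf H}\mathbf{F}+\mu\mathbf I_{Kd})^{-1}\mathbf{E}_k\mathbf{W}_k^{1/2},
\end{equation*}
where $\mathbf{E}_k$ is the block selector. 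So $\mathbf{V}_k^{(l+1)}=\sum_i\mathbf{H}_i^{\mathsf H}\mathbf{U}_i\mathbf{\Psi}_{ik}$ with $d\times d$ matrices $\mathbf{\Psi}_{ik}$. Substituting Step 1 gives
\begin{equation*}
\mathbf{V}_k^{(l+1)}=\sum_{i,j}\mathbf{H}_i^{\mathsf H}\mathbf{H}_i\mathbf{V}_j^{(l)}\mathbf{\Phi}_{ij}\mathbf{\Psi}_{ik}.
\end{equation*}
Multiplying by $\bm{\upbeta}(\mathbf s)$ and using $\bm{\upbeta}(\mathbf s)\mathbf{H}_i^{\mathsf H}\bm{\upalpha}_i^{\mathsf H}(\mathbf r)=h_i^{\mathsf H}(\mathbf r,\mathbf s)$, together with \eqref{eq:alpha unity}, we get $\bm{\upbeta}(\mathbf s)\mathbf{H}_i^{\mathsf H}\mathbf{H}_i\mathbf{V}_j^{(l)}=\int_{\mathcal S_{\mathrm U}^i}h_i^{\mathsf H}(\mathbf r,\mathbf s)\mathbf a_{ij}^{(l)}(\mathbf r)\mathrm d\mathbf r$. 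This is the same manipulation as in \eqref{eq:c function}.

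\textbf{Step 3: matching the statement.} We isolate the $(i,j)=(k,k)$ term and set
\begin{equation*}
\mathbf\Theta_k=\mathbf\Phi_{kk}\mathbf\Psi_{kk},\qquad \mathbf\Sigma^k_{ij}=\mathbf\Phi_{ij}\mathbf\Psi_{ik}\ \text{for } (i,j)\neq(k,k).
\end{equation*}
This reproduces \eqref{eq:iterative equation of vk2}.

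\textbf{Main obstacle.} The delicate point is the $\mu\delta$ term and the infinite-dimensional inverses $\mathrm{J}_k^{-1}$ and $\mathrm{T}_k^{-1}$. The plan must avoid them and justify that all coefficients are genuinely $d\times d$ and independent of $\mathbf s$. The push-through identity does this, but it requires $\mu>0$ (and $\sigma_n^2>0$) for the inverses to exist. If $\mu=0$ occurs, I would handle it as a limit $\mu\to0^+$, or via a pseudo-inverse restricted to the range of $\mathbf F$. I would also note that $\mu$, obtained by bisection, depends only on the Gram matrices, so it too is determined by finitely many quantities at iteration $l$.
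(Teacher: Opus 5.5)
Your proposal is correct and is essentially the paper's argument. Both proofs use a Woodbury-type identity to reduce the infinite-dimensional inverses to $Kd\times Kd$ ones, write $\mathbf u_k(\mathbf r)=\sum_j \mathbf a^{(l)}_{kj}(\mathbf r)\mathbf L_{kj}$ and $\mathbf v_k(\mathbf s)=\sum_i \mathbf c_i(\mathbf s)\mathbf F_{ki}$ with $d\times d$ coefficients, and substitute one into the other, so your $\mathbf\Phi_{kk}\mathbf\Psi_{kk}$ and $\mathbf\Phi_{ij}\mathbf\Psi_{ik}$ coincide with the paper's $\mathbf L_{kk}\mathbf F_{kk}$ and $\mathbf L_{ij}\mathbf F_{ki}$. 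The only difference is the domain: you apply the matrix push-through identity to the coefficient-domain conditions and map back at the end, while the paper applies its functional Woodbury identity (Lemma~\ref{lem:woodbury}) directly to the kernels $\mathrm J_k$ and $\mathrm T_k$, and the paper also tacitly assumes $\mu>0$ when it divides by $\mu$.
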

\begin{IEEEproof}
See Appendix~\ref{appendix:wmmse_recursion}.
\end{IEEEproof}

Both the functional WMMSE recursion~\eqref{eq:iterative equation of vk2} and the GNN update in~\eqref{Update function of GNN} follow a two-stage process, i.e., an aggregation of terms from other users, followed by a combination with a self-term. Specifically, the GNN update~\eqref{Update function of GNN} aggregates the hidden representations of neighboring users, ${\mathbf d_i^{(l)}}(\mathbf s)$, and combines the result with the hidden representation of user $k$, ${\mathbf d_k^{(l)}}(\mathbf s)$. Analogously, the functional WMMSE recursion~\eqref{eq:iterative equation of vk2} aggregates the channel kernels $h_i^{\mathsf H}(\mathbf r,\mathbf s)$ weighted by $\mathbf{a}_{ij}^{(l)}(\mathbf r)$, and combines the result with the self-term $h_k^{\mathsf H}(\mathbf r,\mathbf s)$ weighted by $\mathbf{a}_{kk}^{(l)}(\mathbf r)$.

Despite this structural similarity, the functional WMMSE recursion exhibits four key differences from a conventional GNN update.
\begin{itemize}
\item[1)] It aggregates the channel kernels $h_i^{\mathsf H}(\mathbf r,\mathbf s)$ rather than hidden representations ${\mathbf{d}_i^{(l)}}(\mathbf s)$.
\item[2)] Each user's contribution is weighted by a unique coefficient $\mathbf{a}_{ij}^{(l)}(\mathbf r)$, instead of sharing a common weight across all neighbors.
\item[3)] The self-term for combination is the channel kernel $h_k^{\mathsf H}(\mathbf r,\mathbf s)$ with the coefficient $\mathbf{a}_{kk}^{(l)}(\mathbf r)$, not the hidden representation from the previous layer, $\mathbf d_k^{(l)}$.
\item[4)] The update is non-local. Since the aggregation and combination steps involve integrals over $\mathcal S_{\mathrm B}$ and $\mathcal S_{\mathrm U}^k$, the updated function at any point $\mathbf s$ depends on information from the entire apertures of the BS and users.
\end{itemize}

These observations motivate a new aggregation and combination design. We interpret the beamforming function from the $l$-th iteration, $\mathbf v_k^{(l)}(\mathbf s)$, as the hidden representation in the $l$-th layer of our network, denoted by ${\mathbf d}_k^{(l)}(\mathbf s)\in\mathbb C^{C_l\times 1}$. Following this analogy, the summation terms in~\eqref{eq:iterative equation of vk2} guide the aggregation step, while the additive structure reflects the combination step. 

This model-driven approach leads to the following GNN update equation
\begin{equation} \label{Update function of ModelGNN}
\begin{split}\raisetag{2.2cm}
&{\mathbf{d}}_{k}^{\left( l+1 \right)}\left( \mathbf{s} \right)=\int_{\mathcal{S}_{\mathrm{U}}^{k}}h_{k}^{\mathsf{H}}\left( \mathbf{r},\mathbf{s} \right) {\mathbf{a}}^{(l)}_{kk}\left( \mathbf{r}\right)\mathrm{d}\mathbf{r}{\mathbf{S}}^{(l)}+\\
&\quad\quad\quad\quad\quad\quad\sum_{i=1}^K{\sum_{\substack{j=1\\ (i,j)\ne (k,k)}}^K\int_{\mathcal{S} _{\mathrm{U}}^{i}}h_{i}^{\mathsf{H}}\left( \mathbf{r},\mathbf{s} \right) {\mathbf{a}}^{(l)}_{ij}\left( \mathbf{r} \right)}\mathrm{d}\mathbf{r}{\mathbf{W}}^{(l)},
\end{split}
\end{equation}
where ${\mathbf{a}}^{(l)}_{ij}(\mathbf{r})=\int_{\mathcal{S}_{\mathrm{B}}}{h_i\left( \mathbf{r},\mathbf{s} \right) \mathbf{d}_{j}^{\left( l \right)}\left( \mathbf{s} \right)}\mathrm{d}\mathbf{s}$ is the coefficient function, and the matrices $\mathbf{\Theta}_k$ and $\mathbf{\Sigma}^k_{ij}$ from the original WMMSE algorithm in~\eqref{eq:iterative equation of vk2} are absorbed into the learnable parameter matrices ${\mathbf{S}}^{(l)}$ and ${\mathbf{W}}^{(l)}$, which are shared across all~users.

This update equation can be further refined by exploiting the graph topology. Considering that vertex $k$ only needs to aggregate information from its neighbors (i.e., the vertices connected to $k$), the double summation in~\eqref{Update function of ModelGNN} can be simplified to a single sum over the interfering users $i\neq k$~as
\begin{equation} \label{Update function of ModelGNN topology} \begin{split} &{\mathbf{d}}_{k}^{\left( l+1 \right)}\left( \mathbf{s} \right)=\int_{\mathcal{S}_{\mathrm{U}}^{k}}h_{k}^{\mathsf{H}}\left( \mathbf{r},\mathbf{s} \right) {\mathbf{a}}^{(l)}_{kk}\left( \mathbf{r}\right)\mathrm{d}\mathbf{r}{\mathbf{S}}^{(l)}+\\ &\quad\quad\quad\quad\quad\quad{\sum_{i=1,i\ne k}^K\int_{\mathcal{S} _{\mathrm{U}}^{k}}h_{k}^{\mathsf{H}}\left( \mathbf{r},\mathbf{s} \right) {\mathbf{a}}^{(l)}_{ki}\left( \mathbf{r} \right)}\mathrm{d}\mathbf{r}{\mathbf{W}}^{(l)}. \end{split} \end{equation}

This model-driven update in~\eqref{Update function of ModelGNN topology} forms the core of our proposed INR architecture, which is referred to as BeamINR.

\section{Simulation Results} \label{Simulation}
This section evaluates the proposed functional WMMSE algorithm and BeamINR against relevant numerical and learning-based baselines.

\subsection{Simulation Setup}
Unless otherwise specified, all simulations are conducted using the following setup. The number of users is $K=3$. The BS is equipped with a square CAPA of side length $L_{\mathrm{B}}^x=L_{\mathrm{B}}^y=2\,\text{m}$, while each user has a square CAPA of side length $L_{\mathrm{U}}^x=L_{\mathrm{U}}^y=0.5\,\text{m}$. The position of user $k$, $\mathbf{r}^k_o=[r_{x}^k, r_{y}^k, r_{z}^k]^{\mathsf T}$, is uniformly distributed in a region with $r_{x}^k, r_{y}^k\in [-5,5]\,\text{m}$ and $r_{z}^k\in[20,30]\,\text{m}$. Each user's orientation is randomized, with rotation angles $\omega^k_x, \omega^k_y, \omega^k_z$ drawn independently from \(\mathcal{U}(-\frac{\pi}{2}, \frac{\pi}{2})\). 
The system operates at a carrier frequency of $f=2.4\,\text{GHz}$, which corresponds to a wavelength of $\lambda=0.125\,\text{m}$, with an intrinsic impedance of $\eta=120\,\pi\Omega$. To maximize the multiplexing gain as derived in~\cite{Beamforming_Design}, the number of data streams is set to $d=\min\{d_{\mathrm{B}}, d_{\mathrm{U}}\}$, where $d_{\mathrm{B}} = \big(2\big\lceil \frac{L_{{\mathrm{B}}}^x}{\lambda} \big\rceil + 1\big)\big(2\big\lceil \frac{L_{{\mathrm{B}}}^y}{\lambda} \big\rceil + 1\big)$ and $d_{\mathrm{U}} = \big(2\big\lceil \frac{L_{{\mathrm{U}}}^x}{\lambda} \big\rceil + 1\big)\big(2\big\lceil \frac{L_{{\mathrm{U}}}^y}{\lambda} \big\rceil + 1\big)$. The maximum current budget is $C_{\max}=500\,\text{mA}^2$, and the noise variance is $\sigma_n^2=5.6\times10^{-3}\,\text{V}^2$.

BeamINR is configured with six hidden layers of dimensions $\{64, 128, 512, 512, 128, 64\}$, and the Tanh activation function is applied to each hidden layer. Following the methodology of INR training in~\cite{Implicit_Neural_CAPA}, BeamINR is trained in an unsupervised manner by minimizing the negative weighted sum of two numerical estimates of the objective in~\eqref{P1:capacity}. These two estimates are computed using different integral schemes: one with fixed Gauss--Legendre (GL) quadrature points and the other with randomized sampling points. We generate 500,000 samples for training and use a separate set of 10,000 samples for testing. The model is trained using the Adam optimizer with an initial learning rate of $10^{-3}$ and a batch size of $16$. All simulations are performed on a computer equipped with an Intel Core i9-9980XE CPU (3.00 GHz) and an NVIDIA GeForce RTX 2080 Ti GPU.

\subsection{Performance Analysis} 
We compare the proposed functional WMMSE algorithm and BeamINR with the following baselines.

\begin{itemize}
    \item \textbf{FNNINR}: An INR-based method from~\cite{Implicit_Neural_CAPA}, which uses an FNN to parameterize the beamforming function. For a fair comparison, it is trained using our derived sum-rate expression to compute the loss function.

    \item \textbf{Fourier}: The Fourier-basis method from~\cite{Wavenumber}, which represents the beamforming functions using a finite number of Fourier basis functions, transforming the problem into coefficient optimization.

    \item \textbf{SPDA}: A discretization-based method from~\cite{Pattern_Division}, which approximates a CAPA with an array of discrete~antennas.
\end{itemize}  

\begin{figure}[!t]
\centering
 \includegraphics[width=0.45\textwidth]{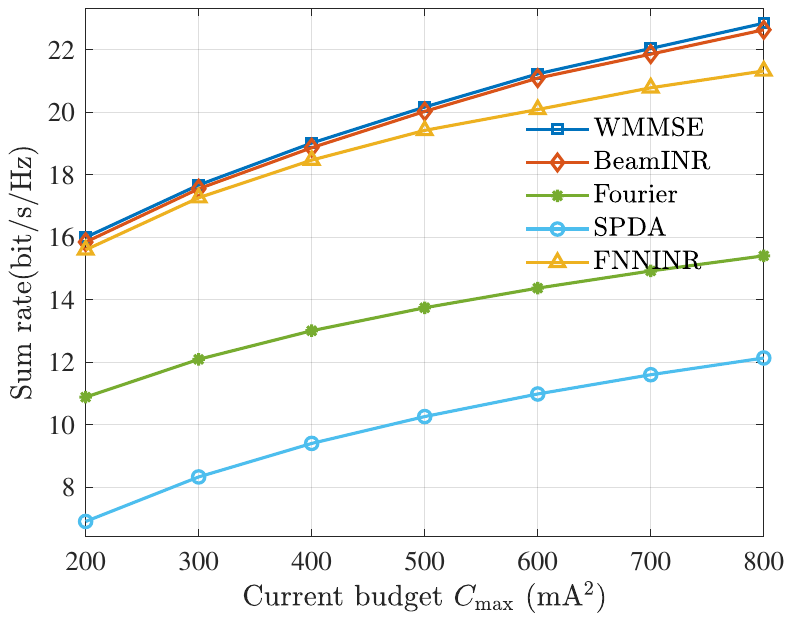}
\caption{Sum rate versus current budget.}  \label{SE_Power}
\end{figure}

%\vspace{-0.3cm}
\begin{figure}[!t]
\centering
 \includegraphics[width=0.45\textwidth]{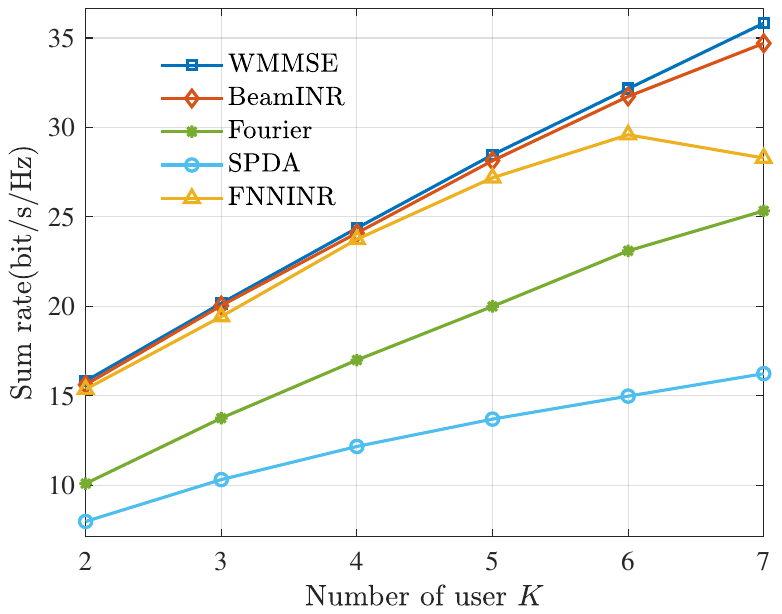}
\caption{Sum rate versus number of users.}  \label{SE_User}
\end{figure}

Fig.~\ref{SE_Power} shows the sum rates under different current budgets. As expected, the sum rate for all methods increases with the available current. Both the proposed functional WMMSE algorithm and the INR-based methods (BeamINR and FNNINR) consistently outperform the Fourier and SPDA methods because they avoid the discretization loss caused by the finite-dimensional approximations. BeamINR achieves performance nearly identical to the iterative functional WMMSE algorithm and outperforms FNNINR. This superiority stems from our model-driven design of the update equation in~\eqref{Update function of ModelGNN topology}, which incorporates both the PE property and the structure of the functional WMMSE algorithm. Fig.~\ref{SE_User} shows the sum rates under different numbers of users. The functional WMMSE algorithm and BeamINR maintain their gains over the baselines, demonstrating stronger capability for multiuser interference mitigation. In contrast, the performance of FNNINR degrades when the number of users is large. Its FNN architecture causes the number of model parameters to grow quadratically with the number of users, which makes the model increasingly difficult to train effectively.

%\vspace{-0.3cm}
\begin{figure}[!t]
\centering
 \includegraphics[width=0.45\textwidth]{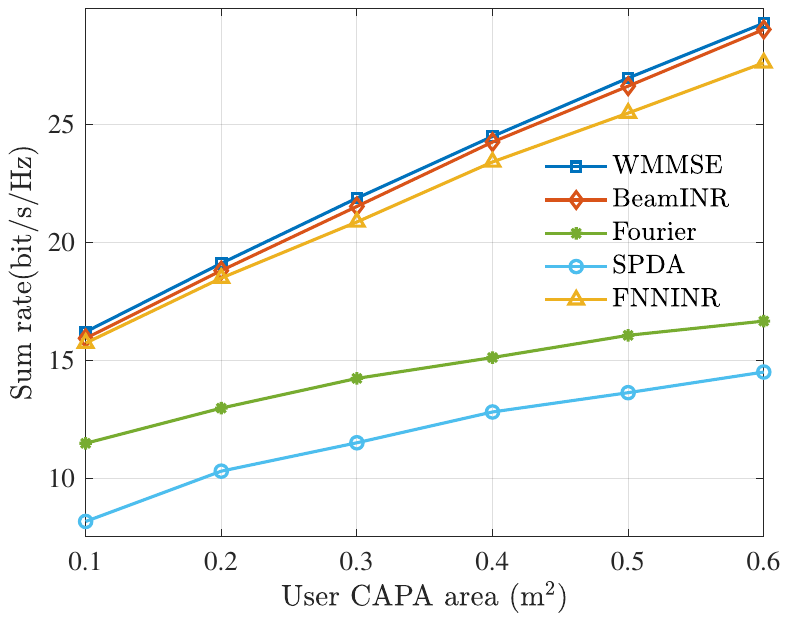}
\caption{Sum rate versus user CAPA area.}  \label{SE_Size_User}
%\vspace{-0.5cm}
\end{figure}

%\vspace{-0.3cm}
\begin{figure}[!t]
\centering
 \includegraphics[width=0.45\textwidth]{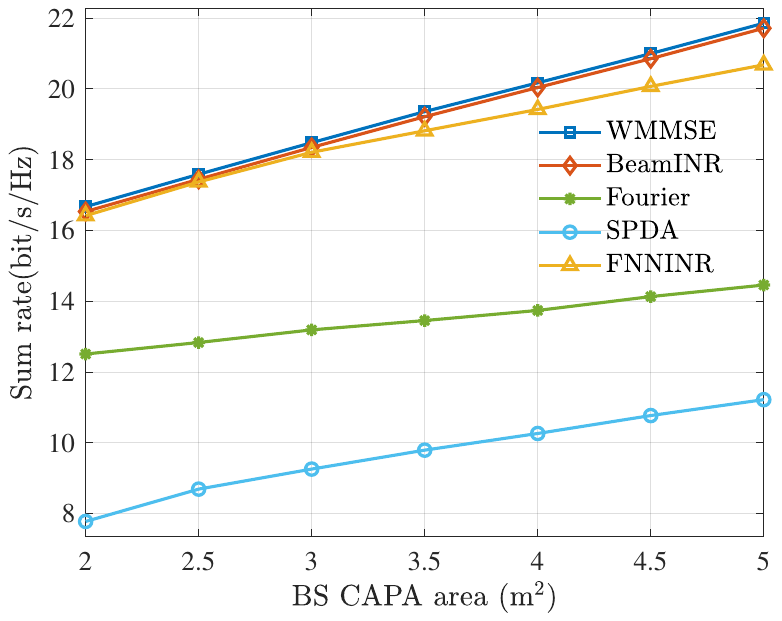}
\caption{Sum rate versus BS CAPA area.}  \label{SE_Size_BS}
%\vspace{-0.5cm}
\end{figure}

\begin{figure}[!t]
\centering
 \includegraphics[width=0.45\textwidth]{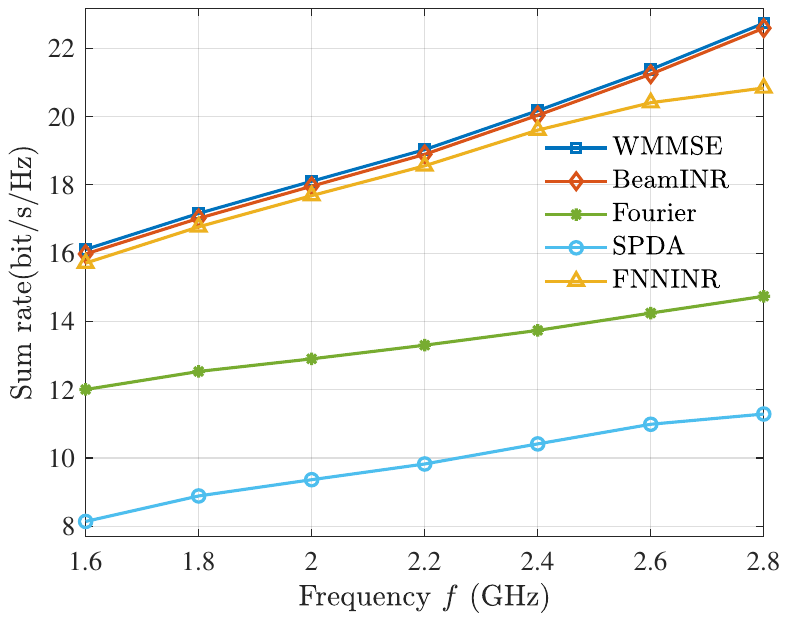}
\caption{Sum rate versus frequency.}  \label{SE_Frequency}
\end{figure}

Figs.~\ref{SE_Size_User} and~\ref{SE_Size_BS} evaluate the impact of the user and BS CAPA sizes, respectively, for square apertures ($L_{\mathrm{U}}^x=L_{\mathrm{U}}^y$ and $L_{\mathrm{B}}^x=L_{\mathrm{B}}^y$). As expected, a larger aperture provides more spatial degrees of freedom, increasing the sum rate for all methods. The performance gap between our proposed methods (functional WMMSE and BeamINR) and the baselines increases as the apertures grow. 
A similar trend is observed in Fig.~\ref{SE_Frequency}, which examines the effect of carrier frequency. The sum rate increases with frequency, consistent with the fact that smaller wavelengths provide greater spatial resolution and more available degrees of freedom~\cite{Enabling_6G_Performance}. Across the tested frequency range, BeamINR achieves performance comparable to that of the functional WMMSE algorithm and outperforms the three baselines.

Table~\ref{Learning_Performance_Num} shows the performance of the INR-based methods under different numbers of training samples. The performance metric is defined as the ratio of the sum rate achieved by each INR-based method to that of the functional WMMSE algorithm. BeamINR consistently outperforms FNNINR across different numbers of training samples. Notably, BeamINR reaches over $95\%$ of the functional WMMSE's performance with only $10$ training samples, whereas FNNINR remains below \(93\%\) even with \(5000\) training samples. This superior sample efficiency is a direct benefit of the designed GNN update equation.

\begin{table}[h]
% \vspace{-0.2cm}
\captionsetup{font=small}
\renewcommand{\arraystretch}{1.3}
\centering
\caption{Learning Performance Versus Number of Training Samples}
\label{Learning_Performance_Num}
\setlength{\tabcolsep}{3pt}
\begin{tabular}{@{}c|*7{c}@{}}
\hline
Number of samples & 5  & 10  & 50  & 100  & 500  & 1000  & 5000 \\
\hline
BeamINR (\%) & 88.19 & 95.23 & 96.92 & 97.38 & 98.62 & 99.22 & 99.30 \\
FNNINR (\%)  & 6.25 & 7.67 & 17.85 & 57.04 & 90.67 & 92.03 & 92.65 \\
\hline
\end{tabular}
% \vspace{-0.2cm}
\end{table}

\subsection{Inference and Training Complexity}
Table~\ref{Complexity} compares the inference time and training complexity of the considered methods. For the INR-based methods, training complexity is measured by the number of samples, training time, and model parameters (space cost) required to reach $95\%$ of the sum rate achieved by the functional WMMSE~algorithm.

The results show that BeamINR and FNNINR offer drastically reduced inference times compared to the numerical baselines. Compared with FNNINR, BeamINR requires significantly fewer training samples, shorter training time, and lower space cost. BeamINR exhibits a slightly higher latency than FNNINR. This is an expected tradeoff for its update equation, which introduces additional computations not present in an~FNN.

\begin{table}[htbp]	
% \vspace{-0.1cm}
\captionsetup{font=small}
\renewcommand{\arraystretch}{1.2}
\centering 
% \small
\caption{Inference Time and Training Complexity}	
\begin{tabular}{c|c|c|c|c}
\hline
\multirow{2}{*}{Name} & \multirow{2}{*}{Inference time} & \multicolumn{3}{c}{Training Complexity} \\
\cline{3-5}
                     &                                  & Sample  & Time      & Space \\ \hline             
\textbf{BeamINR}      & 0.052 s                         & 10     & 0.81 h   & 3.68 M    \\ 
\cline{1-5}
\textbf{FNNINR}      & 0.032 s                         & 20 K   & 41.9 h   & 18.11 M    \\
\cline{1-5}
\textbf{WMMSE}     & 1.378 s                           & -- & --  & --  \\ 
\cline{1-5}
\textbf{Fourier}     & 10.88 s                         & -- & --  & --  \\ 
\cline{1-5}
\textbf{SPDA}     & 9.590 s                            & -- & --  & --  \\ 
\hline
\end{tabular}
\begin{minipage}{0.95\linewidth}
\quad\footnotesize \textit{Note:} ``K'' and ``M'' represent thousand and million, respectively.
\end{minipage}
\label{Complexity}
% \vspace{-0.1cm}
\end{table}

\subsection{Ablation Study}
To validate and quantify the contribution of the two main designs in BeamINR, we conduct an ablation study. Specifically, we isolate the WMMSE-inspired design for the update equation and the topology-aware simplification, and create two ablated variants as follows.

\begin{itemize}
    \item \textbf{ConINR}: This variant uses the conventional GNN update equation in~\eqref{Update function of GNN}. It therefore relies solely on the PE property without guidance from the structure of the functional WMMSE algorithm.

    \item \textbf{VarINR}: This variant uses the update equation in~\eqref{Update function of ModelGNN} but omits the topology-aware simplification.
\end{itemize}
The performance metric in this subsection is the ratio of the sum rate achieved by each INR to that achieved by the proposed functional WMMSE algorithm.

\begin{table}[h]
% \vspace{-0.2cm}
\captionsetup{font=small}
\centering
\caption{Learning Performance Versus Current Budgets}
\renewcommand{\arraystretch}{1.2}
\label{Learning_Performance}
\setlength{\tabcolsep}{3pt}
\begin{tabular}{@{}c|*7{c}@{}}
\hline
$C_{\max}$ (mA$^2$) & 200  & 300  & 400  & 500  & 600  & 700  & 800 \\
\hline
BeamINR (\%) & 99.97 & 99.32 & 98.32 & 98.30 & 97.91 & 97.74 & 97.48 \\
VarINR (\%)  & 98.78 & 98.19 & 97.53 & 96.72 & 95.75 & 94.52 & 92.31 \\
ConINR (\%)  & 97.33 & 97.06 & 96.94 & 96.06 & 94.22 & 93.10 & 92.08 \\
\hline
\end{tabular}
% \vspace{-0.2cm}
\end{table}

Table~\ref{Learning_Performance} shows the learning performance of the INR-based methods under various current budgets. Both BeamINR and VarINR outperform ConINR, confirming the substantial performance gain from incorporating the functional WMMSE iterative structure into the GNN. The advantage of BeamINR over VarINR is evident under high current budgets, demonstrating the benefit from further considering the topology-aware~simplification.

\begin{table}[h]
% \vspace{-0.2cm}
\captionsetup{font=small}
\centering
\caption{User Generalization}
\renewcommand{\arraystretch}{1.2}
\label{User_Generalizability}
\setlength{\tabcolsep}{3pt}
\begin{tabular}{@{}c|*7{c}@{}}
\hline
$K$    & 2   & 3     & 4   & 5   & 6   & 7   & 8    \\
\hline
BeamINR (\%)  & 71.80  & 86.29 & 90.71 & 98.85 & 91.71 & 90.31 & 86.51 \\
VarINR (\%)   & 37.61  & 42.72 & 69.14 & 96.71 & 75.25 & 57.83 & 56.23 \\
ConINR (\%)   & 55.86  & 81.68 & 85.62 & 95.13 & 84.08 & 80.42 & 76.26\\
\hline
\end{tabular}
% \vspace{-0.2cm}
\end{table}

Table~\ref{User_Generalizability} evaluates the generalization performance to the number of users. All INRs are trained with $K=5$ and tested for $K=2, \dots, 8$. BeamINR achieves the highest performance for all tested user numbers. For $K=5$, BeamINR and VarINR achieve comparable performance. However, when generalizing to unseen numbers of users, VarINR suffers a severe performance degradation, suggesting that merely incorporating the structure of the functional WMMSE algorithm is insufficient for generalization. In contrast, ConINR exhibits more stable generalization performance, which implies that combining the WMMSE-inspired update with the topology-aware simplification is crucial to achieve both high learning and generalization performance across different numbers of~users.

\begin{table}[h]
% \vspace{-0.15cm}
\centering
\captionsetup{font=small}
\caption{CAPA Size Generalization}
\renewcommand{\arraystretch}{1.2}
\label{Size_Generalizability}
\begin{tabular}{c|ccccc}\hline
$A_{\mathrm{B}}$ (m$^2$) & 2 & 3 & 4 & 5 & 6 \\
\hline
BeamINR (\%) & 96.62  & 96.92 & 98.30 & 97.35 & 96.56 \\
VarINR (\%)  & 95.34  & 95.93 & 96.72 & 96.84 & 95.54 \\
ConINR (\%)  & 94.21  & 94.67 & 96.06 & 94.83 & 94.01 \\
\hline
$A_{\mathrm{U}}$ (m$^2$) & 0.3 & 0.4 & 0.5 & 0.6 & 0.7 \\
\hline
BeamINR (\%) & 93.51 & 94.87 & 97.39 & 96.32 & 95.02 \\
VarINR (\%)  & 91.07 & 93.36 & 96.52 & 95.07 & 93.60 \\
ConINR (\%)  & 86.35 & 91.07 & 95.83 & 94.32 & 92.93 \\
\hline
\end{tabular}
% \vspace{-0.15cm}
\end{table}

Table~\ref{Size_Generalizability} evaluates the generalization performance with respect to CAPA size. 
For BS-side generalization, all models are trained with $A_{\mathrm{B}}=4\,\mathrm{m}^2$ and tested over $A_{\mathrm{B}}\in[2,6]\,\mathrm{m}^2$ with $L_{\mathrm{B}}^x=L_{\mathrm{B}}^y$. For user-side generalization, the models are trained with $A_{\mathrm{U}}=0.5\,\mathrm{m}^2$ and tested over $A_{\mathrm{U}}\in[0.3,0.7]\,\mathrm{m}^2$ with $L_{\mathrm{U}}^x=L_{\mathrm{U}}^y$. It is shown that all three methods generalize well to BS CAPA sizes, where BeamINR consistently achieves the highest performance. For user CAPA generalization, BeamINR also outperforms the two ablated variants, especially when the user aperture is smaller than the training~value. 

\begin{table}[h]
% \vspace{-0.2cm}
\captionsetup{font=small}
\centering
\caption{Frequency Generalization}
\renewcommand{\arraystretch}{1.2}
\label{Frequency_Generalizability}
\setlength{\tabcolsep}{3pt}
\begin{tabular}{@{}c|*7{c}@{}}
\hline
$f$ (GHz)    & 1.8   & 2     & 2.2   & 2.4   & 2.6   & 2.8   & 3    \\
\hline
BeamINR (\%) & 90.84 & 92.20 & 95.82 & 98.30 & 95.64 & 92.31 & 90.52 \\
VarINR (\%)  & 88.92 & 90.26 & 93.06 & 96.72 & 94.59 & 90.01 & 89.35 \\
ConINR (\%)  & 81.36 & 88.56 & 92.73 & 96.06 & 93.02 & 86.96 & 80.23 \\
\hline
\end{tabular}
% \vspace{-0.2cm}
\end{table}

Table~\ref{Frequency_Generalizability} evaluates the generalization performance with respect to carrier frequency. All models are trained at $f=2.4\,\text{GHz}$ and tested over $f\in[1.8,3]\,\text{GHz}$. ConINR's performance degrades quickly as the frequency deviates from its training point, whereas both VarINR and BeamINR exhibit better generalization. This improvement follows from their model-driven design. Their WMMSE-inspired update equations in~\eqref{Update function of ModelGNN} and~\eqref{Update function of ModelGNN topology} involve the channel kernels. As shown in~\eqref{Green's function}, the channel kernel is a function of wavelength $\lambda$ and hence of the carrier frequency, which enables VarINR and BeamINR to adapt to frequency variations. 

\begin{table}[htbp]	
% \vspace{-0.1cm}
\captionsetup{font=small}
\centering 
\caption{Inference Time and Training Complexity of INR Variants}	
\renewcommand{\arraystretch}{1.2}
\begin{tabular}{c|c|c|c|c}
\hline
\multirow{2}{*}{Name} & \multirow{2}{*}{Inference time} & \multicolumn{3}{c}{Training Complexity} \\
\cline{3-5}
                     &                                  & Sample  & Time      & Space \\ \hline             
\textbf{BeamINR} & 0.052 s & 10 & 0.81 h & 3.68 M \\ 
\cline{1-5}
\textbf{VarINR}  & 0.082 s & 1 K & 32.1 h & 4.72 M \\
\cline{1-5}
\textbf{ConINR}  & 0.033 s & 5 K & 3.68 h & 8.63 M \\ 
\hline
\end{tabular}
\begin{minipage}{0.95\linewidth}
\quad\footnotesize \textit{Note:} ``K'' and ``M'' represent thousand and million, respectively.
\end{minipage}
\label{Complexity_Ablation}
% \vspace{-0.1cm}
\end{table}

Table~\ref{Complexity_Ablation} compares the inference latency and training complexity of the three methods, where training complexity is measured by the number of samples, training time, and space cost required to reach $95\%$ of the sum rate achieved by the functional WMMSE algorithm. BeamINR requires the fewest training samples, shortest training time, and lowest space cost. Compared with VarINR, BeamINR reduces both training and inference complexity by eliminating redundant aggregation terms through the topology-aware simplification. Although ConINR offers the lowest inference latency, it requires more training samples and higher space cost.

\section{Conclusions}
This paper investigated beamforming optimization in multiuser multi-CAPA systems, where both the BS and users are equipped with CAPAs. We first derived a closed-form expression for the achievable sum rate and then proposed a functional WMMSE algorithm for sum-rate maximization. Building on this algorithm, we further designed BeamINR to learn the beamforming function. BeamINR exploits the PE property of the optimal beamforming policy, incorporates a functional WMMSE-inspired update equation, and further employs a topology-aware simplification to improve the learning efficiency.
Simulation results demonstrated that the proposed functional WMMSE algorithm outperforms existing numerical optimization baselines. BeamINR achieves sum-rate performance comparable to that of the functional WMMSE algorithm while substantially reducing inference latency. Compared with INR-based baselines, BeamINR requires fewer training samples, shorter training time and lower memory cost, and exhibits stronger generalization across different numbers of users, CAPA sizes, and carrier frequencies.

\appendices
\numberwithin{equation}{section}  
\section{Proof of Proposition~\ref{proposition_capacity}} \label{appendix_capacity}
From~\eqref{receive yk}, the achievable sum rate of user $k$ is the mutual information between the data vector and the received signal,~i.e.,
\begin{equation}
    R_k=H(y_k(\cdot))-H(y_k(\cdot)\!\!\mid\!\! \mathbf{x}_k),\label{eq:capacity}
\end{equation}
where $y_k(\cdot)$ denotes the received-signal function on $\mathcal{S}_{\mathrm{U}}^k$, and $H(\cdot)$ and $H(\cdot\!\!\mid\!\! \cdot)$ denote the differential entropy and conditional differential entropy, respectively.

In~\eqref{eq:capacity}, both $y_k(\mathbf r)$ and $n_k(\mathbf r)$ are Gaussian processes. To evaluate their differential entropies, we employ the Karhunen--Loève expansion (KLE), which represents a Gaussian process in an orthonormal basis and yields a sequence of independent Gaussian random variables. Thus, as the number of basis functions tends to infinity, the entropy of the process can be characterized through the entropy of these Gaussian coefficients. Let $\bm{\upalpha}_k(\mathbf r) = [\upalpha^k_1(\mathbf r), \ldots, \upalpha^k_{N_r}(\mathbf r)] \in \mathbb{C}^{1\times N_r}$ denote an orthonormal basis on $\mathcal{S}_{\mathrm{U}}^k$ with $N_r \to \infty$, satisfying
\begin{equation}\label{eq:complete set}
   \int_{\mathcal{S}_{\mathrm{U}}^k} \bm{\upalpha}_k^{\mathsf H}(\mathbf r)\bm{\upalpha}_k(\mathbf r)\mathrm d\mathbf r = \mathbf I_{N_r}.
\end{equation}
Under this basis, the noise process $n_k(\mathbf r)$ admits the expansion
\begin{equation}\label{eq:n expansion}
    n_k(\mathbf r) = \bm{\upalpha}_k(\mathbf r)\mathbf{n}_k,
\end{equation}
where $\mathbf{n}_k \in \mathbb{C}^{N_r\times 1}$ and $\mathbf{n}_{k} \sim \mathcal{CN}(\mathbf 0,\sigma_n^2\mathbf I_{N_r})$. Since $\bm{\upalpha}_k(\mathbf r)$ is complete on $\mathcal{S}^k_{\mathrm{U}}$, each $\mathbf{a}_{ki}(\mathbf r)$ admits the expansion 
\begin{equation}\label{eq:a expansion}
    \mathbf{a}_{ki}(\mathbf r)=\bm{\upalpha}_k(\mathbf r)\mathbf A_{ki},
\end{equation}
where $\mathbf A_{ki}\in\mathbb{C}^{N_r\times d}$ is obtained by the following projection
\begin{equation} \label{eq:a projection coefficient}
    \mathbf A_{ki}=\int_{\mathcal{S}_{\mathrm{U}}^k} \bm{\upalpha}_k^{\mathsf H}(\mathbf r)\mathbf{a}_{ki}(\mathbf r)\mathrm d\mathbf r.
\end{equation}
Substituting~\eqref{eq:n expansion} and~\eqref{eq:a expansion} into~\eqref{receive yk} yields
\begin{equation}\label{eq:transformation}
\begin{split}
    y_k(\mathbf{r})&=\sum\limits_{i=1}^K\bm{\upalpha}_k(\mathbf r)\mathbf{A}_{ki}\mathbf{x}_i+\bm{\upalpha}_k(\mathbf r)\mathbf{n}_k,\\
&=\bm{\upalpha}_k(\mathbf r)\Big(\sum\limits_{i=1}^K\mathbf{A}_{ki}\mathbf{x}_i+\mathbf{n}_k\Big)\triangleq\bm{\upalpha}_k(\mathbf r)\mathbf{y}_k.
\end{split}
\end{equation}

For a linear transformation $\tilde{\mathbf y}=\mathbf A\mathbf y$, the differential entropy and conditional differential entropy satisfy $H(\tilde{\mathbf y})=H(\mathbf y)+\log|\det(\mathbf A)|$ and $H(\tilde{\mathbf y}\mid \mathbf x)=H(\mathbf y\mid \mathbf x)+\log|\det(\mathbf A)|$, respectively~\cite{Elements_of_Information_Theory}. With~\eqref{eq:transformation}, applying these identities to~\eqref{eq:capacity}~yields
\begin{equation}\label{eq:capacity2}
\begin{split}%\raisetag{1.3cm}
    R_k&=H(\mathbf{y}_k)+\log|\det(\bm{\upalpha}_k(\cdot))|\\
&\quad\quad\quad-\big(H(\mathbf{y}_k\mid \mathbf{x}_k)+\log|\det(\bm{\upalpha}_k(\cdot))|\big)\\
&=H(\mathbf{y}_k)-H(\mathbf{y}_k\mid \mathbf{x}_k).
\end{split}
\end{equation}
Since $\mathbf{y}_k$ is a Gaussian random vector, its entropy and conditional entropy are given by
\begin{subequations}\label{eq:entropy and conditional entropy of gaussian_vector}
\begin{align}
&\!\!\!    H(\mathbf{y}_{k})\!=\!\!\!\!\underset{N_r\to\infty}{\lim}\!\!\!\log\!\det\!\big(\!(\pi e)^{N_r}\!(\sum\limits_{i=1}^K\mathbf{A}_{ki}\mathbf{A}_{ki}^{\mathsf H}\!+\!\sigma_n^2\mathbf{I}_{N_r}) \!\big), \label{eq:entropy of gaussian_vector}\\
&\!\!\!    H(\mathbf{y}_{k}\!\!\mid\!\! \mathbf{x}_k)\!\!=\!\!\!\!\!\underset{N_r\to\infty}{\lim}\!\!\!\log\!\det\!\!\big(\!(\pi e\!)^{N_r}\!(\!   \sum\limits_{\substack{i=1\\ i\ne k}}^{K}\!\mathbf{A}_{ki}\mathbf{A}_{ki}^{\mathsf H}\!+\!\!\sigma_n^2\mathbf{I}_{N_r}\!)\!\big).\label{eq:conditional entropy of gaussian_vector}
\end{align}
\end{subequations}
Substituting~\eqref{eq:entropy and conditional entropy of gaussian_vector} into~\eqref{eq:capacity2} yields
\begin{equation}\label{eq:capacity3}
\begin{split}\raisetag{0.6cm}
    R_k&=\!\!\underset{N_r\to\infty}{\lim}\!\!\!\log\!\det\!\Big(\mathbf{I}_{d}\!\!+\mathbf{A}^{\mathsf H}_{kk}\big(   \sum\limits_{\substack{i=1\\ i\ne k}}^{K}\mathbf{A}_{ki}\mathbf{A}_{ki}^{\mathsf H}+\sigma_n^2\mathbf{I}_{N_r})^{-1}\mathbf{A}_{kk}\big)\Big)\\
    &\triangleq\!\!\underset{N_r\to\infty}{\lim}\!\!\!\log\!\det\!\big(\mathbf{I}_{d}\!\!+\tilde{\mathbf{Q}}_k\big).
\end{split}
\end{equation}

Next, we convert \eqref{eq:capacity3} from the coefficient domain to the functional domain using~\eqref{eq:complete set}$\sim$\eqref{eq:a expansion}. Defining $\tilde{\mathbf{A}}_{k}=[\mathbf{A}_{ki}]_{i\neq k}\in\mathbb{C}^{N_r\times (K-1)d}$ and applying the Woodbury matrix identity, we obtain
\begin{equation}\label{eq:Q matrix}
\begin{split}\raisetag{1.4cm}
&   \tilde{\mathbf{Q}}_k=\mathbf{A}^{\mathsf{H}}_{kk}\big(\tilde{\mathbf{A}}_{k}\tilde{\mathbf{A}}^{\mathsf H}_{k}+\sigma_n^2\mathbf{I}_{N_r})^{-1}\mathbf{A}_{kk}\\
&   =\frac{1}{\sigma_n^2}\mathbf{A}^{\mathsf{H}}_{kk}\mathbf{A}_{kk}\!-\!\!\frac{1}{\sigma_n^4}\mathbf{A}^{\mathsf{H}}_{kk}\tilde{\mathbf{A}}_{k}(\mathbf{I}_{(K-1)d}\!+\!\frac{1}{\sigma_n^2}\tilde{\mathbf{A}}^{\mathsf H}_{k}\tilde{\mathbf{A}}_{k})^{-1}\!\!\tilde{\mathbf{A}}^{\mathsf H}_{k}\mathbf{A}_{kk}.
\end{split}
\end{equation}
Using~\eqref{eq:complete set}$\sim$\eqref{eq:a expansion}, each matrix product in~\eqref{eq:Q matrix} can be written as a functional integral. For example,
\begin{equation}
    \begin{split}\label{eq:matrix into integral akk}
\mathbf{A}^{\mathsf{H}}_{kk}\mathbf{A}_{kk}&= \int_{\mathcal{S}^k_{\mathrm{U}}}\underset{\mathbf{a}^{\mathsf H}_{kk}\left( \mathbf{r}\right)}{\underbrace{\mathbf{A}^{\mathsf{H}}_{kk} {\bm{\upalpha }^{\mathsf{H}}_k\left( \mathbf{r} \right)} }}\underset{\mathbf{a}_{kk}\left( \mathbf{r} \right)}{\underbrace{\bm{\upalpha}_k\left( \mathbf{r} \right)\mathbf{A}_{kk}}}\mathrm{d}\mathbf{r}\\
&   = \int_{\mathcal{S}^k_{\mathrm{U}}}{\mathbf{a}^{\mathsf{H}}_{kk}\left( \mathbf{r}\right) \mathbf{a}_{kk}\left( \mathbf{r} \right)}\mathrm{d}\mathbf{r}.
    \end{split}
\end{equation}
Applying the same conversion to the remaining terms in~\eqref{eq:Q matrix}~yields
\begin{equation}\label{eq:Q matrix1}
\begin{split}\raisetag{1.35cm}
\tilde{\mathbf{Q}}_k=\iint_{\mathcal{S}^k_{\mathrm{U}}}{\mathbf{a}^{\mathsf{H}}_{kk}\left( \mathbf{r}_1\right)\mathrm{E}_k\left( \mathbf{r}_1,\mathbf{r}_2 \right)}\mathbf{a}_{kk}\left( \mathbf{r}_2 \right)\mathrm{d}\mathbf{r}_1\mathrm{d}\mathbf{r}_2,
\end{split}
\end{equation}
where $\mathrm{E}_k\left( \mathbf{r}_1,\mathbf{r}_2 \right)\!=\!\mathrm{Y}\!\!\left( \mathbf{r}_1,\mathbf{r}_2 \right)\!-\!\bm{\uppsi}(\mathbf{r}_1)\left( \mathbf{I}_{(K-1)d}\!+\!\!\mathbf{G} \right)^{\!-1}\!\! \bm{\upphi}(\mathbf{r}_2)$, $\mathbf{G}=\iint_{\mathcal{S}_{\mathrm{U}}}\tilde{\mathbf{a}}^{\mathsf{H}}_k(\mathbf{r}_1)\mathrm{Y}\left( \mathbf{r}_1,\mathbf{r}_2 \right)\tilde{\mathbf{a}}_k(\mathbf{r}_2)\mathrm{d}\mathbf{r}_1\mathrm{d}\mathbf{r}_2$, and the auxiliary functions are defined as $\mathrm{Y}\left( \mathbf{r}_1,\mathbf{r}_2 \right)=\frac{1}{\sigma_n^2}\delta(\mathbf{r}_1-\mathbf{r}_2)$, $\tilde{\mathbf{a}}_k(\mathbf{r})=[\mathbf{a}_{ki}(\mathbf{r})]_{i\neq k}\in\mathbb{C}^{1\times (K-1)d}$, and 
\begin{subequations}\raisetag{5cm}
    \begin{align}
\bm{\uppsi}(\mathbf{r}_1)&=   \int_{\mathcal{S}_{\mathrm{U}}}\mathrm{Y}\left( \mathbf{r}_1,\mathbf{r}_2 \right)\tilde{\mathbf{a}}_k(\mathbf{r}_2)\mathrm{d}\mathbf{r}_2,\\
\bm{\upphi}(\mathbf{r}_2)&=   \int_{\mathcal{S}_{\mathrm{U}}}\tilde{\mathbf{a}}^{\mathsf{H}}_k(\mathbf{r}_1)\mathrm{Y}\left( \mathbf{r}_1,\mathbf{r}_2 \right)\mathrm{d}\mathbf{r}_1.
    \end{align}
\end{subequations}

\begin{lemma}[Functional Woodbury Identity]\label{lem:woodbury}
Let $\mathrm{J}\left( \mathbf{r}_1,\mathbf{r}_2 \right)$ be a continuous invertible kernel, and let $\mathbf{a}(\mathbf{r}_1)\in \mathbb{C} ^{1\times d}$ and $\mathbf{b}(\mathbf{r}_2)\in \mathbb{C} ^{d\times 1}$ for $\mathbf{r}_1,\mathbf{r}_2\in \mathcal{S}$. 
If the term $\mathrm{J}\left( \mathbf{r}_1,\mathbf{r}_2 \right) +\mathbf{a}(\mathbf{r}_1)\mathbf{b}\left( \mathbf{r}_2 \right)$ is invertible,~then
\begin{equation} \label{eq:lemma1 inverse of continuous kernel}
\begin{aligned}
&\left( \mathrm{J}\left( \mathbf{r}_1,\mathbf{r}_2 \right) +\mathbf{a}(\mathbf{r}_1)\mathbf{b}\left( \mathbf{r}_2\right) \right) ^{-1}
\\
&\quad\quad\quad= \mathrm{J}^{-1}\left( \mathbf{r}_1,\mathbf{r}_2 \right) -\bm{\uppsi}(\mathbf{r}_1)\left( \mathbf{I}_d+\mathbf{G} \right)^{-1} \bm{\upphi}(\mathbf{r}_2),
\end{aligned}
\end{equation}
where $\mathbf{G}=  \iint_{\mathcal{S}}{\mathbf{b}\left( \mathbf{r}_1 \right)  \mathrm{J}^{-1}\!\left( \mathbf{r}_1,\mathbf{r}_2 \right)\mathbf{a}(\mathbf{r}_2)}\mathrm{d}\mathbf{r}_1\mathrm{d}\mathbf{r}_2$ and
\begin{equation}\label{eq:lemma1 auxiliary terms}
\begin{aligned}
&\bm{\uppsi}(\mathbf{r}_1)=\int_{\mathcal{S}}{\mathrm{J}^{-1}\left( \mathbf{r}_1,\mathbf{r}_2 \right) \mathbf{a}(\mathbf{r}_2)\mathrm{d}\mathbf{r}_2},
\\
&\bm{\upphi}(\mathbf{r}_2)=     \int_{\mathcal{S}}{\mathbf{b}\left( \mathbf{r}_1 \right)  \mathrm{J}^{-1}\!\left( \mathbf{r}_1,\mathbf{r}_2 \right) \mathrm{d}\mathbf{r}_1}.
\end{aligned}
\end{equation}
\end{lemma}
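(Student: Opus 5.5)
The plan is to verify the claimed inverse directly against Definition~\ref{Definition:Inverse of continuous kernel}. Write $\mathrm{K}(\mathbf r_1,\mathbf r_2)=\mathrm{J}(\mathbf r_1,\mathbf r_2)+\mathbf a(\mathbf r_1)\mathbf b(\mathbf r_2)$ and let $\mathrm{M}(\mathbf z,\mathbf r_1)$ denote the right-hand side of~\eqref{eq:lemma1 inverse of continuous kernel} with arguments $(\mathbf z,\mathbf r_1)$. It then suffices to show $\int_{\mathcal S}\mathrm M(\mathbf z,\mathbf r_1)\mathrm K(\mathbf r_1,\mathbf s)\mathrm d\mathbf r_1=\delta(\mathbf z-\mathbf s)$. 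Since $\mathrm K$ is assumed invertible, the inverse is unique: if $\mathrm K^{-1}$ is any inverse, integrate the identity against $\mathrm K^{-1}(\mathbf s,\mathbf w)$ and use associativity of kernel composition together with the sifting property of $\delta$. So verification establishes the equality.

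I will expand the product into four terms using bilinearity of kernel composition:
\begin{equation*}
\begin{split}
\int \mathrm J^{-1}(\mathbf z,\mathbf r_1)\mathrm J(\mathbf r_1,\mathbf s)\mathrm d\mathbf r_1
&+\int \mathrm J^{-1}(\mathbf z,\mathbf r_1)\mathbf a(\mathbf r_1)\mathrm d\mathbf r_1\,\mathbf b(\mathbf s)
-\bm\uppsi(\mathbf z)(\mathbf I_d+\mathbf G)^{-1}\!\int\bm\upphi(\mathbf r_1)\mathrm J(\mathbf r_1,\mathbf s)\mathrm d\mathbf r_1\\
&-\bm\uppsi(\mathbf z)(\mathbf I_d+\mathbf G)^{-1}\!\int\bm\upphi(\mathbf r_1)\mathbf a(\mathbf r_1)\mathrm d\mathbf r_1\,\mathbf b(\mathbf s).
\end{split}
\end{equation*}
The terms simplify as follows:
\begin{itemize}
\item The first term is $\delta(\mathbf z-\mathbf s)$ by definition of $\mathrm J^{-1}$.
\item The second term is $\bm\uppsi(\mathbf z)\mathbf b(\mathbf s)$ by~\eqref{eq:lemma1 auxiliary terms}.
\item In the fourth term, $\int\bm\upphi(\mathbf r_1)\mathbf a(\mathbf r_1)\mathrm d\mathbf r_1=\mathbf G$ by Fubini.
\item In the third term, $\int\bm\upphi(\mathbf r_1)\mathrm J(\mathbf r_1,\mathbf s)\mathrm d\mathbf r_1=\iint\mathbf b(\mathbf r_2)\mathrm J^{-1}(\mathbf r_2,\mathbf r_1)\mathrm J(\mathbf r_1,\mathbf s)\mathrm d\mathbf r_1\mathrm d\mathbf r_2=\mathbf b(\mathbf s)$.
\end{itemize}
Collecting these, the non-delta part becomes $\bm\uppsi(\mathbf z)\big[\mathbf I_d-(\mathbf I_d+\mathbf G)^{-1}-(\mathbf I_d+\mathbf G)^{-1}\mathbf G\big]\mathbf b(\mathbf s)=\bm\uppsi(\mathbf z)\big[\mathbf I_d-(\mathbf I_d+\mathbf G)^{-1}(\mathbf I_d+\mathbf G)\big]\mathbf b(\mathbf s)=0$. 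This mirrors the matrix Woodbury proof exactly.

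The main obstacle is the third term. It requires $\mathrm J^{-1}$ to also be a \emph{right} inverse, i.e.\ $\int\mathrm J^{-1}(\mathbf r_2,\mathbf r_1)\mathrm J(\mathbf r_1,\mathbf s)\mathrm d\mathbf r_1=\delta(\mathbf r_2-\mathbf s)$ contracted from the appropriate side. Definition~\ref{Definition:Inverse of continuous kernel} only states one-sided composition. I would handle this by noting that for an invertible kernel, the left inverse is also a right inverse. This follows from the standard argument that a left inverse of an invertible operator equals its two-sided inverse; alternatively, it can be checked through the orthonormal-basis representation used throughout the paper, where kernels become (limits of) square matrices and a left inverse is a right inverse. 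The remaining care is justifying the interchange of integrals (Fubini) and the existence of $(\mathbf I_d+\mathbf G)^{-1}$. The latter I would deduce from invertibility of $\mathrm K$, exactly as in the finite-dimensional Woodbury identity: if $(\mathbf I_d+\mathbf G)\mathbf x=0$ with $\mathbf x\neq 0$, then $\bm\uppsi(\cdot)\mathbf x$ is a nonzero function annihilated by $\mathrm K$, contradicting invertibility.
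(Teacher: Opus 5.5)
Your proposal is correct and takes the paper's approach. The paper likewise verifies the candidate against Definition~\ref{Definition:Inverse of continuous kernel} by expanding the composition into the same four terms, using $\int\bm\upphi(\mathbf r)\mathrm J(\mathbf r,\mathbf s)\mathrm d\mathbf r=\mathbf b(\mathbf s)$ and $\int\bm\upphi(\mathbf r)\mathbf a(\mathbf r)\mathrm d\mathbf r=\mathbf G$, and then applying $(\mathbf I_d+\mathbf G)^{-1}\mathbf G+(\mathbf I_d+\mathbf G)^{-1}=\mathbf I_d$. One correction: the third-term identity $\int\mathrm J^{-1}(\mathbf r_2,\mathbf r_1)\mathrm J(\mathbf r_1,\mathbf s)\mathrm d\mathbf r_1=\delta(\mathbf r_2-\mathbf s)$ is exactly the one-sided property that Definition~\ref{Definition:Inverse of continuous kernel} grants, so no right-inverse argument is needed there; two-sidedness only enters your extra uniqueness and invertibility-of-$(\mathbf I_d+\mathbf G)$ remarks, which the paper does not address.
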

\begin{IEEEproof}
See Appendix~\ref{App:proof of Lemma}.
\end{IEEEproof}

Since $\mathrm{E}_k\left( \mathbf{r}_1,\mathbf{r}_2 \right)$ in~\eqref{eq:Q matrix1} has a structure similar to the right-hand side of~\eqref{eq:lemma1 inverse of continuous kernel}, applying Lemma~\ref{lem:woodbury} yields
\begin{equation}\label{eq:functional Ek}
    \begin{split}\raisetag{1.2cm}
       \!\!\!    \mathrm{E}_k\left( \mathbf{r}_1,\mathbf{r}_2 \right)&   =\big(\mathrm{Y}^{-1}\left( \mathbf{r}_1,\mathbf{r}_2 \right)+\tilde{\mathbf{a}}_k(\mathbf{r}_1)\tilde{\mathbf{a}}^{\mathsf{H}}_k(\mathbf{r}_2)\big)^{-1}\\
&   =\big({\sigma_n^{2}}\delta(\mathbf{r}_1-\mathbf{r}_2)\!+\! 
    \sum\limits_{\substack{i=1\\ i\ne k}}^K\! \mathbf{a}_{ki}(\mathbf{r}_1)\,\mathbf{a}_{ki}^{\mathsf{H}}(\mathbf{r}_2) 
\big)^{-1}.
    \end{split}
\end{equation}
Substituting~\eqref{eq:functional Ek} into~\eqref{eq:Q matrix1} yields $\tilde{\mathbf{Q}}_k=\mathbf{Q}_k$, where $\mathbf Q_k$ is defined in~\eqref{eq:theorem1 Q}. Substituting this result into~\eqref{eq:capacity3} and summing over $k$ completes the proof.

\section{Proof of Lemma~\ref{lem:woodbury}} \label{App:proof of Lemma}
By Definition~\ref{Definition:Inverse of continuous kernel} in Sec.~\ref{Sec:Problem Formulation}, it suffices to verify that the following term equals $\delta(\mathbf{r}_1-\mathbf{r}_2)$.
\begin{equation} \label{eq: proof of lemma1 eq1}
\begin{split}\raisetag{20.0ex}
&\int_{\mathcal{S}}\big(\big(\mathrm{J}^{-1}\left( \mathbf{r}_1,\mathbf{r} \right) -\bm{\uppsi}(\mathbf{r}_1)\left( \mathbf{I}_d+\mathbf{G} \right)^{-1}\bm{\upphi}(\mathbf{r})\big)\big.\cdot\\
&\quad\quad\big.\left(\mathrm{J}\left( \mathbf{r},\mathbf{r}_2 \right) +\mathbf{a}(\mathbf{r})\mathbf{b}\left( \mathbf{r}_2 \right) \right)\big)\mathrm{d}\mathbf{r}\\
&   =\delta \left( \mathbf{r}_1\!-\!\mathbf{r}_2 \right)\!-\bm{\uppsi}(\mathbf{r}_1)\left( \mathbf{I}_d+\mathbf{G} \right)^{-1}\!\!\!\underset{\mathbf{G}}{\underbrace{\int_{\mathcal{S}}{\bm{\upphi}(\mathbf{r})\mathbf{a}(\mathbf{r})}\mathrm{d}\mathbf{r}}}\cdot\mathbf{b}\left( \mathbf{r}_2 \right)-\\
&\bm{\uppsi}(\mathbf{r}_1)\!\left( \mathbf{I}_d\!+\!\mathbf{G} \right)^{\!-1}\!\!\!\!\underset{\mathbf{b}\left( \mathbf{r}_2 \right)}{\underbrace{\int_{\mathcal{S}}\!{\!\bm{\upphi}(\mathbf{r})\mathrm{J}\left(\mathbf{r},\!\mathbf{r}_2\right)}\mathrm{d}\mathbf{r}}}\!+\!\!\underset{\bm{\uppsi}(\mathbf{r}_1)}{\underbrace{\int_{\mathcal{S}}\!{\mathrm{J}^{\!-1}\!\!\left(\mathbf{r}_1,\!\mathbf{r}\right) \mathbf{a}(\mathbf{r})}\mathrm{d}\mathbf{r}}}\cdot\mathbf{b}\!\left(\mathbf{r}_2\right).
\end{split}
\end{equation}

Using the definitions in~\eqref{eq:lemma1 auxiliary terms}, the right-hand side of~\eqref{eq: proof of lemma1 eq1} simplifies to
\begin{equation}
\begin{split}\raisetag{3.0ex}
&\eqref{eq: proof of lemma1 eq1}=\delta \left( \mathbf{r}_1-\mathbf{r}_2 \right) -\bm{\uppsi}(\mathbf{r}_1)\left( \mathbf{I}_d+\mathbf{G} \right)^{-1}\mathbf{G}\mathbf{b}\left( \mathbf{r}_2 \right)\\
&\quad\quad\quad -\bm{\uppsi}(\mathbf{r}_1)\left( \mathbf{I}_d+\mathbf{G} \right)^{-1}\mathbf{b}\left( \mathbf{r}_2 \right)+\bm{\uppsi}(\mathbf{r}_1)\mathbf{b}\left( \mathbf{r}_2 \right)\\
&=\delta \!\left(\mathbf{r}_1\!-\!\mathbf{r}_2 \right)\!-\!\bm{\uppsi}(\mathbf{r}_1)\! \big( \!\left( \mathbf{I}_d+\mathbf{G}\right)^{\!-1}\!\mathbf{G}\!+\!\left( \mathbf{I}_d+\mathbf{G} \right)^{\!-1}\!\!-\!\mathbf{I}_d\big)\mathbf{b}\left( \mathbf{r}_2\right)\\
&=\delta \left( \mathbf{r}_1-\mathbf{r}_2 \right),
\end{split}
\end{equation}
where we used the identity $\left( \mathbf{I}_d+\mathbf{G}\right)^{-1}\mathbf{G}+\left( \mathbf{I}_d+\mathbf{G} \right)^{-1}=\mathbf{I}_d$. Hence, by Definition~\ref{Definition:Inverse of continuous kernel}, the proof is complete.

\section{Proof of Proposition~\ref{proposition:MSE}} \label{appendix:MSE}
From the first-order optimality condition of problem~\eqref{P2:MMSE} with respect to $\mathbf{u}_{k}\left( \mathbf{r} \right)$, the optimal combining function is
\begin{equation} \label{eq:optimal u}
 \mathbf{u}_{k}^{\text{opt}}\left( \mathbf{r} \right) =    \int_{\mathcal{S}^k _{\mathrm{U}}}{\mathrm{J}_{k}^{-1}\left( \mathbf{r},\mathbf{r}_1 \right) \mathbf{a}_{kk}(\mathbf{r}_1)\mathrm{d}\mathbf{r}_1},
\end{equation}
where $\mathrm{J}_{k}(\mathbf{r},\mathbf{r}_1)\!\triangleq\! \sum\limits_{j=1}^K\! \mathbf{a}_{kj}(\mathbf{r})\,\mathbf{a}_{kj}^{\mathsf{H}}(\mathbf{r}_1) 
\!+\!{\sigma_n^{2}}\,\delta(\mathbf{r}\!-\!\mathbf{r}_1)$ and $\mathrm{J}^{-1}_{k}(\mathbf{r},\mathbf{r}_1)$ denotes its inverse as defined in Definition~\ref{Definition:Inverse of continuous kernel}. Likewise, the first-order optimality condition with respect to $\mathbf{W}_{k}$ gives
\begin{equation} \label{eq:optimal W}
    \mathbf{W}_{k}^{\text{opt}}= \mathbf{E}_{k}^{-1}.
\end{equation}

Substituting $\mathbf{u}_{k}^{\text{opt}}\left( \mathbf{r} \right)$ and $\mathbf{W}_{k}^{\text{opt}}$ into~\eqref{P2:MMSE} yields an optimization problem with respect to $\mathbf{v}_k(\mathbf{s})$ as
\begin{subequations}\label{P3:WSR-CAPA}
\begin{align}
&\!\!\!\max_{\mathbf{v}_k(\mathbf{s})} 
\sum\limits_{k=1}^{K} \log\det\big(\big(\mathbf{E}_k^{\text{opt}}\big)^{\!-1}\big)\label{P3:objective} \\
&\text{s.t.}\quad\!\!\! \mathbf{E}_{k}^{\text{opt}}\!\!=\!\mathbf{I}_d\!-\!\!\!\iint_{\mathcal{S} _{\mathrm{U}}^k}\!\!\!\!\!\!{\mathbf{a}_{kk}^{\mathsf H}(\mathbf{r}_1)\mathrm{J}_{k}^{-1}\!\!\left( \mathbf{r}_1,\mathbf{r}_2 \right) \mathbf{a}_{kk}(\mathbf{r}_2)}\mathrm{d}\mathbf{r}_1\mathrm{d}\mathbf{r}_2,\label{P3:E opt}\\
&\quad\,\,\,\,\,\,\,\eqref{eq:theorem1 a},\eqref{P1:power constraint}.\nonumber
\end{align}
\end{subequations}

Applying Lemma~\ref{lem:woodbury} in Appendix~\ref{appendix_capacity} to the inverse kernel in~\eqref{P3:E opt} gives
\begin{equation}\label{eq:inverse of J}
\begin{aligned}
&\mathrm{J}_{k}^{-1}\left( \mathbf{r}_1,\mathbf{r}_2 \right) =\left(\mathrm{J}_{\bar{k}}\left( \mathbf{r}_1,\mathbf{r}_2 \right) +\mathbf{a}_{kk}(\mathbf{r}_1)\mathbf{a}_{kk}^{\mathsf H}\left( \mathbf{r}_2 \right) \right) ^{-1}
\\
&\quad=\mathrm{J}_{\bar{k}}^{-1}\left( \mathbf{r}_1,\mathbf{r}_2 \right) -\bm{\uppsi}_k(\mathbf{r}_1)\left( \mathbf{I}_d+\mathbf{G}_k \right)^{-1} \bm{\upphi}_k(\mathbf{r}_2),
\end{aligned}
\end{equation}
where $\mathbf{G}_k=\iint_{\mathcal{S} _{\mathrm{U}}^{k}}{\mathbf{a}_{kk}^{\mathsf H}\left( \mathbf{r}_1 \right) \mathrm{J}_{\bar{k}}^{-1}\left( \mathbf{r}_1,\mathbf{r}_2 \right)\mathbf{a}_{kk}(\mathbf{r}_2)}\mathrm{d}\mathbf{r}_1\mathrm{d}\mathbf{r}_2$, and $\mathrm{J}_{\bar{k}}(\mathbf{r}_1,\mathbf{r}_2)$, $\bm{\uppsi}_k\left( \mathbf{r}_1 \right)$ and $\bm{\upphi}_k(\mathbf{r}_2)$ are defined as
\begin{subequations}\label{eq:auxiliary terms}
\begin{align}
&   \mathrm{J}_{\bar{k}}(\mathbf{r}_1,\mathbf{r}_2)\!\!=\!\! \sum\limits_{j=1,j\neq k}^K\!\! \mathbf{a}_{kj}(\mathbf{r}_1)\mathbf{a}_{kj}^{\mathsf{H}}(\mathbf{r}_2) 
\!+\!{\sigma_n^{2}}\delta(\mathbf{r}_1\!-\!\mathbf{r}_2),\\
&\bm{\uppsi}_k\left( \mathbf{r}_1 \right) =    \int_{\mathcal{S} _{\mathrm{U}}^{k}}{\mathrm{J}_{\bar{k}}^{-1}\left( \mathbf{r}_1,\mathbf{r}_2 \right) \mathbf{a}_{kk}(\mathbf{r}_2)\mathrm{d}\mathbf{r}_2}, \\
&\bm{\upphi}_k(\mathbf{r}_2)=    \int_{\mathcal{S} _{\mathrm{U}}^{k}}{\mathbf{a}_{kk}^{\mathsf H}\left( \mathbf{r}_1 \right) \mathrm{J}_{\bar{k}}^{-1}\left( \mathbf{r}_1,\mathbf{r}_2 \right) \mathrm{d}\mathbf{r}_1}.
\end{align}
\end{subequations}
Substituting~\eqref{eq:inverse of J} and~\eqref{eq:auxiliary terms} into~\eqref{P3:E opt} yields
\begin{equation}\label{eq:inverse of E}
\begin{aligned}
&(\mathbf{E}_k^{\text{opt}})^{-1}=\big( \mathbf{I}_d-\mathbf{G}_k\left( \mathbf{I}_d+\mathbf{G}_k \right) ^{-1} \big) ^{-1}\!\!\!=\mathbf{I}_d+\mathbf{G}_k
\\
&   =\mathbf{I}_d+\!\! \iint_{\mathcal{S}^k_{\mathrm{U}}}{\!}\mathbf{a}_{kk}^{\mathsf{H}}(\mathbf{r}_1)\,\mathrm{J}_{\bar{k}}^{-1}(\mathbf{r}_1,\mathbf{r}_2)\,\mathbf{a}_{kk}(\mathbf{r}_2)\,\mathrm{d}\mathbf{r}_2\,\mathrm{d}\mathbf{r}_1.
\end{aligned}
\end{equation}
Combining~\eqref{eq:inverse of E} with~\eqref{P3:WSR-CAPA} completes the proof.

\section{Proof of Proposition~\ref{prop:wmmse_recursion}}
\label{appendix:wmmse_recursion}
To prove Proposition~\ref{prop:wmmse_recursion}, we first apply Lemma~\ref{lem:woodbury} to the inverse kernels in~\eqref{eq:functional-update-v} and~\eqref{eq:functional update u}, and then substitute the resulting expression of~\eqref{eq:functional update u} into~\eqref{eq:functional-update-v}.

Define $\mathbf{m}\left( \mathbf{s} \right) \triangleq\left[\mathbf{c}_1\!\left( \mathbf{s} \right),\cdots ,\mathbf{c}_K\!\left( \mathbf{s} \right) \right] \in \mathbb{C} ^{1\times Kd}$ and $\mathbf{n}\left( \mathbf{s} \right) \triangleq\left[\mathbf{c}_{1}\left( \mathbf{s} \right)\mathbf{W}_1^{\mathsf{H}} ,\cdots ,\mathbf{c}_{K}\left( \mathbf{s} \right)\mathbf{W}^{\mathsf{H}}_K \right]^{\mathsf{H}} \in \mathbb{C} ^{Kd\times 1}$. Then, 
\begin{equation} \label{eq:T eq mn}
    \mathrm{T}_k\left( \mathbf{s}_1,\mathbf{s} \right)
=
\mu \delta \left( \mathbf{s}_1-\mathbf{s} \right)
+\mathbf{m}\left( \mathbf{s}_1 \right)\mathbf{n}\left( \mathbf{s} \right).
\end{equation}
Substituting~\eqref{eq:T eq mn} into~\eqref{eq:functional-update-v} and applying Lemma~\ref{lem:woodbury} gives
\begin{equation}\label{eq:update equation of Vk_appendix}
\begin{split}\raisetag{0.6cm}
&\mathbf{v}_{k}\left( \mathbf{s} \right) = \!\int_{\mathcal{S} _{\mathrm{B}}}\!\!\Big(\frac{1}{\mu}\delta \left( \mathbf{s}_1-\mathbf{s} \right)-\\
&\quad\quad\quad\quad\frac{1}{\mu}\mathbf{m}\left( \mathbf{s} \right)
\left( \mathbf{I}_{Kd}+\mathbf{G} \right)^{-1}
\frac{1}{\mu}\mathbf{n}\left( \mathbf{s}_1 \right)\Big)\mathbf{c}_{k}\left( \mathbf{s}_1 \right)\mathbf{W}_k \mathrm{d}\mathbf{s}_1\\
&=
\frac{1}{\mu}\mathbf{c}_k\left( \mathbf{s} \right) \mathbf{W}_k\!\!-\!\!\frac{1}{\mu ^2}\mathbf{m}\left( \mathbf{s} \right)
\left( \mathbf{I}_{Kd}\!+\!\mathbf{G} \right) ^{-1}\!\!\!\!\int_{\mathcal{S} _{\mathrm{B}}}\!\!\!\!\!\mathbf{n}\left( \mathbf{s}_1 \right) \mathbf{c}_k\left( \mathbf{s}_1 \right) \mathbf{W}_k
\mathrm{d}\mathbf{s}_1\\
&\triangleq
\frac{1}{\mu}\mathbf{c}_k\left( \mathbf{s} \right) \mathbf{W}_k
-\mathbf{m}\left( \mathbf{s} \right){\mathbf{D}}_k,
\end{split}
\end{equation}
where ${\mathbf D}_k
\triangleq
\frac{1}{\mu^{2}}
\left(\mathbf I_{Kd}+\mathbf G\right)^{-1}
\int_{\mathcal S_{\mathrm B}}
\mathbf n(\mathbf s)\mathbf c_k(\mathbf s)\mathbf W_k
\mathrm d\mathbf s$ and $\mathbf{G}=\frac{1}{\mu}\int_{\mathcal{S}_{\mathrm{B}}}\mathbf{n}\left( \mathbf{s}\right)\mathbf{m}\left(\mathbf{s}\right)\mathrm{d}\mathbf{s}$. Since ${\mathbf D}_k\in\mathbb C^{Kd\times d}$, it can be partitioned into $K$ stacked block matrices as ${\mathbf D}_k=
\left[\bar{\mathbf D}_{k1}^{\mathsf T},\ldots,\bar{\mathbf D}_{kK}^{\mathsf T}\right]^{\mathsf T}$ with $\bar{\mathbf D}_{ki}\in\mathbb C^{d\times d},\ \forall i$. 
Using the definition of $\mathbf m(\mathbf s)$ in~\eqref{eq:T eq mn},~\eqref{eq:update equation of Vk_appendix} can be rewritten as
\begin{equation}\label{eq:update equation of Vk2_appendix}
\begin{split}
\mathbf v_k(\mathbf s)&=\frac{1}{\mu}\mathbf{c}_k\left( \mathbf{s} \right) \mathbf{W}_k
-\sum\limits_{i=1}^{K}\mathbf c_i(\mathbf s)\bar{\mathbf D}_{ki}\\
&=\sum\limits_{i=1}^{K}\mathbf c_i(\mathbf s)\mathbf F_{ki},
\end{split}
\end{equation}
where $\mathbf F_{kk}=\frac{1}{\mu}\mathbf W_k-\bar{\mathbf D}_{kk}$ and $\mathbf F_{ki}=-\bar{\mathbf D}_{ki}$ for $i\neq k$. Substituting the definition of $\mathbf c_i(\mathbf s)$ in~\eqref{eq:c function} into~\eqref{eq:update equation of Vk2_appendix} yields
\begin{equation}\label{eq:update equation of Vk3_appendix}
\mathbf v_k(\mathbf s)
=
\sum\limits_{i=1}^{K}
\Big(
\int_{\mathcal S_{\mathrm U}^{i}}
h_i^{\mathsf H}(\mathbf r,\mathbf s)\mathbf u_i(\mathbf r)\,\mathrm d\mathbf r
\Big){\mathbf F}_{ki}.
\end{equation}

To handle the inverse kernel $\mathrm{J}_{k}\!\left( \mathbf{r}_1,\mathbf{r}_2 \right)$ in~\eqref{eq:functional update u}, define $\mathbf{o}_k\left( \mathbf{r} \right)\triangleq
\left[ \mathbf{a}_{k1}\left( \mathbf{r} \right) ,\cdots ,\mathbf{a}_{kK}\left( \mathbf{r} \right) \right]
\in \mathbb{C} ^{1\times Kd}$. Then, 
\begin{equation} \label{eq:J eq oo}
    \mathrm{J}_{k}\!\left( \mathbf{r}_1,\mathbf{r}_2 \right)=
\sigma_n^{2}\delta \left( \mathbf{r}_1-\mathbf{r}_2 \right)+\mathbf{o}_k\left( \mathbf{r}_1 \right)\mathbf{o}^{\mathsf{H}}_k\left( \mathbf{r}_2 \right).
\end{equation}
Substituting~\eqref{eq:J eq oo} into~\eqref{eq:functional update u} and applying Lemma~\ref{lem:woodbury} in Appendix~\ref{appendix_capacity} yields
\begin{equation}\label{eq:update equation of uk_appendix}
\begin{split}\raisetag{0.6cm}
&\mathbf{u}_{k}\left( \mathbf{r} \right)
= \int_{\mathcal{S}_{\mathrm{U}}^{k}}
\Big(
\frac{1}{\sigma_n^2}\delta \left( \mathbf{r}_1-\mathbf{r} \right)
-\\
&\quad\quad\quad\quad\frac{1}{\sigma_n^4}\mathbf{o}_k\left( \mathbf{r} \right)
\left( \mathbf{I}_{Kd}+\bar{\mathbf{G}}_k \right)^{-1}
\mathbf{o}_k^{\mathsf{H}}\left( \mathbf{r}_1 \right)
\Big)
\mathbf{a}_{kk}\left( \mathbf{r}_1 \right)\mathrm{d}\mathbf{r}_1 \\
&=\frac{1}{\sigma_n^2}\mathbf{a}_{kk}\left( \mathbf{r} \right)\!\!-\!\!\frac{1}{\sigma_n^4}\mathbf{o}_k\!\left( \mathbf{r} \right)\!\left( \mathbf{I}_{Kd}\!+\!\bar{\mathbf{G}}_k \right)^{-1}\!\!\!\!
\int_{\mathcal{S}_{\mathrm{U}}^{k}}\!\!\!\!\!\mathbf{o}_{k}^{\mathsf H}\left( \mathbf{r}_1 \right)
\mathbf{a}_{kk}\left( \mathbf{r}_1 \right)
\mathrm{d}\mathbf{r}_1\\
&\triangleq\frac{1}{\sigma_n^2}\mathbf{a}_{kk}\left( \mathbf{r} \right)\!\!-\!\!\frac{1}{\sigma_n^4}\mathbf{o}_k\!\left( \mathbf{r} \right)\mathbf{Z}_{k},
\end{split}
\end{equation}
where $\mathbf{Z}_{k}\triangleq\left( \mathbf{I}_{Kd}\!+\!\bar{\mathbf{G}}_k \right)^{-1}\!\!
\int_{\mathcal{S}_{\mathrm{U}}^{k}}\!\mathbf{o}_{k}^{\mathsf H}\left( \mathbf{r}_1 \right)
\mathbf{a}_{kk}\left( \mathbf{r}_1 \right)
\mathrm{d}\mathbf{r}_1$ and $\bar{\mathbf{G}}_k=\frac{1}{\sigma_n^2}\int_{\mathcal{S}^k_{\mathrm U}}\mathbf{o}_{k}^{\mathsf H}\left( \mathbf{r} \right)
\mathbf{o}_k\left( \mathbf{r} \right)
\mathrm{d}\mathbf{r}$. Since ${\mathbf Z}_k\in\mathbb C^{Kd\times d}$, it can be partitioned into $K$ stacked block matrices as ${\mathbf Z}_k=
\left[\bar{\mathbf Z}_{k1}^{\mathsf T},\ldots,\bar{\mathbf Z}_{kK}^{\mathsf T}\right]^{\mathsf T}$ with $\bar{\mathbf Z}_{ki}\in\mathbb C^{d\times d},\ \forall i$. 
Using the definition of $\mathbf{o}_k(\mathbf r)$ above~\eqref{eq:J eq oo},~\eqref{eq:update equation of uk_appendix} can be rewritten as
\begin{equation}\label{eq:update equation of uk2_appendix}
\begin{split}
\mathbf u_k(\mathbf r)&=\frac{1}{\sigma_n^2}\mathbf{a}_{kk}\left( \mathbf{r} \right)
-\sum\limits_{i=1}^{K}\mathbf a_{ki}(\mathbf r)\bar{\mathbf Z}_{ki}\\
&=\sum\limits_{i=1}^{K}\mathbf a_{ki}(\mathbf r)\mathbf L_{ki},
\end{split}
\end{equation}
where $\mathbf L_{kk}=\frac{1}{\sigma_n^2}\mathbf{I}_{d}-\bar{\mathbf Z}_{kk}$ and $\mathbf L_{ki}=-\bar{\mathbf Z}_{ki}$ for $i\neq k$.

Using~\eqref{eq:update equation of uk2_appendix} and~\eqref{eq:update equation of Vk3_appendix}, the updates of $\mathbf{u}^{(l+1)}_k(\mathbf r)$ and $\mathbf v^{(l+1)}_k(\mathbf s)$ at iteration $(l+1)$ in Table~\ref{tab:functional-wmmse} can be written as
\begin{subequations}
    \begin{align}
\mathbf {u}^{(l+1)}_k(\mathbf r)&=\sum\limits_{i=1}^{K}\mathbf a^{(l)}_{ki}(\mathbf r)\mathbf L^{(l)}_{ki},\label{eq:u update equation at l+1 iteration}\\
\mathbf v^{(l+1)}_k(\mathbf s)&
=
\sum\limits_{i=1}^{K}
\Big(\!
\int_{\mathcal S_{\mathrm U}^{i}}
\!\!\!\!\!h_i^{\mathsf H}(\mathbf r,\mathbf s)\mathbf u^{(l+1)}_i\!(\mathbf r)\,\mathrm d\mathbf r
\Big)\mathbf F^{(l)}_{ki},\label{eq:v update equation at l+1 iteration}
    \end{align}
\end{subequations}
where $\mathbf{a}^{(l)}_{ij}\!\left( \mathbf{r} \right)\!\!=\!\!\int_{\mathcal{S}_{\mathrm{B}}}
\!\!h_i\left( \mathbf{r},\mathbf{s}_1 \right)\mathbf{v}_{j}^{(l)}\!\!\left( \mathbf{s}_1 \right)\mathrm{d}\mathbf{s}_1$, and $\mathbf L^{(l)}_{ki}$ and $\mathbf F^{(l)}_{ki}$ are the corresponding coefficient matrices at iteration $(l+1)$.

Substituting~\eqref{eq:u update equation at l+1 iteration} into~\eqref{eq:v update equation at l+1 iteration} gives
\begin{equation}\label{eq:iterative equation of vk_appendix}
\begin{split}\raisetag{1cm}
&\mathbf{v}_{k}^{(l+1)}(\mathbf{s})
= \sum_{i=1}^{K}\!\!\int_{\mathcal{S}_{\mathrm{U}}^{i}}\!\!\!\!\!h_{i}^{\mathsf{H}}(\mathbf{r},\mathbf{s})\big(\sum\limits_{j=1}^{K}\mathbf a^{(l)}_{ij}(\mathbf r)\mathbf L^{(l)}_{ij}\big) \,\mathrm d\mathbf{r}\;\mathbf{F}^{(l)}_{ki}\\
&=\int_{\mathcal{S}_{\mathrm{U}}^{k}}h_{k}^{\mathsf{H}}\left( \mathbf{r},\mathbf{s} \right) \mathbf{a}^{(l)}_{kk}\left( \mathbf{r}\right)\mathrm{d}\mathbf{r}\mathbf{\Theta}_{k}\\
&\quad+\sum_{i=1}^K\sum_{\substack{j=1\\ (i,j)\ne (k,k)}}^K
\int_{\mathcal{S} _{\mathrm{U}}^{i}}h_{i}^{\mathsf{H}}\left( \mathbf{r},\mathbf{s} \right) \mathbf{a}^{(l)}_{ij}\left( \mathbf{r} \right)\mathrm{d}\mathbf{r}\mathbf{\Sigma}^k_{ij},
\end{split}
\end{equation}
where $\mathbf{\Theta}_k=\mathbf{L}^{(l)}_{kk}\mathbf{F}^{(l)}_{kk}$ and $\mathbf{\Sigma}^k_{ij}=\mathbf{L}^{(l)}_{ij}\mathbf{F}^{(l)}_{ki}$.

\bibliographystyle{IEEEtran}
\bibliography{main}
\end{document}